\newtheorem{theorem}{Theorem}
\newtheorem{lemma}[theorem]{Lemma}
\newtheorem{corollary}[theorem]{Corollary}
\newtheorem{definition}{Definition}
\begin{document}
\title{Minmax Tree Facility Location and Sink Evacuation with Dynamic Confluent Flows}

\author{Di Chen \\ HKUST \and Mordecai Golin \\ HKUST}


\maketitle

\begin{abstract}
Let $G=(V,E)$ be a graph modelling  a building or road network in which  edges  have-both  travel times (lengths) and   {\em capacities} associated with them. An edge's
capacity is the number of people that can enter that edge in a unit of time.  
 In emergencies, people evacuate towards the exits.  If too many people try to evacuate through the same edge, {\em congestion} builds up and slows down the evacuation.

Graphs with both lengths and capacities are known as   {\em Dynamic Flow networks}.
An {\em evacuation plan} for $G$  consists of a choice of exit locations  and a partition of the people at the vertices into groups,  with each group evacuating to the same exit.   The {\em evacuation time} of  a plan is the time it takes  until  the last person  evacuates.  The {\em $k$-sink evacuation problem} is to provide an evacuation plan with $k$ exit locations  that minimizes the evacuation time. It is known that this problem is NP-Hard for general graphs but no polynomial time algorithm was previously known even for the case of $G$ a tree. This paper presents an $O(n k^2 \log^5 n)$  algorithm for the $k$-sink evacuation problem on trees. Our algorithms also apply to a more general class of problems, which we call minmax tree facility location.
\end{abstract}

\section{Introduction}
\label{sec: mjgintro}

{\em Dynamic flow networks} model movement of items on a graph.
Each vertex $v$ is assigned   some initial set of supplies $w_v$.  Supplies flow across edges.  Each edge $e$ has a length -- the time required to traverse  it  -- and  a   capacity $c_e$,  which  limits the rate of the flow of supplies into the edge in one time unit.  If all edges have the same capacity $c_e=c$ the network is said to have {\em uniform capacity}.   As supplies move around the graph,
 {\em congestion} can occur as supplies back up at a vertex, increasing the time necessary to send a flow.
 
 Dynamic flow networks  were introduced by Ford and Fulkerson 
in \cite{Ford1958a}   and have since been extensively used and  analyzed.   There are essentially two basic types of problems, with many variants of each.
These are  the {\em Max Flow over Time (MFOT)} problem   of how much flow can be moved  (between specified vertices)  in a given time $T$ 
and the {\em Quickest Flow Problem (QFP)} of how quickly a given  $W$ units of flow can be moved.  Good surveys of the area and applications can be found in
\cite{Skutella2009,Aronson1989,Fleischer2007,Pascoal2006}.

One variant  of the QFP  that is of interest is the transshipment  problem, e.g., \cite{Hoppe2000b}, in which  the graph has several sources and sinks, with the original supplies being the sources and each sink having a specified demand.  The problem is to find the minimum time required to satisfy all of the demands.  \cite{Hoppe2000b} designed the first polynomial time algorithm for that problem, for the case of integral travel times.

Variants of QFP Dynamic flow problems can also model \cite{Higashikawa2014} {\em evacuation problems}.  In these,  vertex supplies are people in a building(s) and the problem is to find a routing strategy (evacuation plan) that evacuates all of them to specified sinks (exits)  in minimum time.  Solving this using  (integral) dynamic  flows, would assign each person an evacuation path with, possibly, two people at the same vertex travelling radically different paths.

A slightly modified version of the problem, the one addressed here, is for the plan to assign to each vertex $v$ exactly one   exit or {\em evacuation edge}, i.e., a sign stating ``this way out''. All people starting or arriving at $v$ must evacuate through that edge. After traversing the edge they  follow the evacuation edge at the arrival vertex. They continue following  the unique evacuation edges  until reaching a sink,  where they exit.
The simpler optimization problem is, given the sinks,   to determine  a plan minimizing the total time needed to evacuate everyone.  A more complicated version is, given $k$, to find the (vertex) locations of the $k$ sinks/exits {\em and} associated evacuation plan that  together minimizes the evacuation time. This is the {\em $k$-sink location problem}.  

Flows with the property that all  flows entering a vertex  leave along  the same edge are known as {\em confluent}\footnote{Confluent flows occur naturally in problems other than evacuations, e.g.,  packet forwarding and railway scheduling \cite{Dressler2010b}.}; even  in the static case constructing  an optimal confluent flow is known to be very difficult. i.e.,  if P $\not=$ NP, then it is even impossible to construct a constant-factor approximate optimal confluent flow in polynomial time on a general graph  \cite{Chen2007,Dressler2010b,Chen2006,Shepherd2015}.
 
 Note that if the capacities are  ``large enough'' then no congestion can occur and every person follows the shortest path to some exit with the cost of the plan being the length of the  maximum shortest path.  This is exactly the $k$-center problem on graphs which is already known to be NP-Hard \cite[ND50]{garey1979computers}. Unlike  $k$-center, which is polynomial-time solvable for fixed $k$,  Kamiyama {\em et al.}~\cite{Kamiyama} proves 
 by reduction to {\em Partition},   that, even for fixed   $k=1$   finding the min-time evacuation protocol is still NP-Hard for general graphs

The only solvable known case for general $k$ is for $G$ a path. For paths with uniform capacities,  \cite{higashikawa2015multiple} gives an $O(kn)$ algorithm.\footnote{This is generalized to  the general capacity path to  $O(k n \log^2 n)$ in the unpublished \cite{ArumugamAGS15}.}

When $G$ is a tree
the $1$-sink location problem can be solved    \cite{Mamada2006} in $O(n \log^2 n)$ time.   This can be reduced  \cite{Higashikawa2014} down to $O(n \log n)$ for the uniform capacity version, i.e., all all the $c_e$ are identical. If the {\em locations of the $k$ sinks are given as input},
\cite{Mamada2005} gives an $O(n^2 k \log^2  n)$ algorithm for finding the minimum time evacuation protocol. i.e., a partitioning of the tree into subtrees that evacuate to each sink. The best previous known time for solving the $k$-sink location problem was $O(n (c \log n)^{k+1})$, where $c$ is some constant \cite{Mamada2005a}.


\subsection{Our contributions}
This paper gives the first polynomial time algorithm for solving the $k$-sink location problem on trees. Our result  uses the  $O(n \log^2 n)$ algorithm of \cite{Mamada2005}, for calculating the evacuation time of a tree given the location of a sink,  as an oracle.

\begin{theorem}
	The $k$-sink evacuation problem can be solved in time $O(n k^2 \log^5 n)$.
	\label{theorem:kSinkRunTime}
\end{theorem}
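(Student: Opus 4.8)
The plan is to reduce the $k$-sink location problem to a min-max \emph{connected partition} problem and then solve the latter by a monotone feasibility test wrapped in a search over candidate optimal values. First I would observe that because the flow is confluent and $G$ is a tree, every sink's evacuation region is a connected subtree: the ``this way out'' edges orient each edge toward its sink, and since tree paths are unique, the set of vertices draining to a fixed sink is connected. Hence any evacuation plan is equivalent to a partition of $V$ into at most $k$ vertex-disjoint subtrees together with a choice of one sink in each; the cost of a part is the $1$-sink evacuation time of that subtree with its sink, which is exactly what the oracle of \cite{Mamada2005} computes in $O(n\log^2 n)$ time. The objective is to minimize the maximum part cost over all such partitions.

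Next I would set up the search. Define $\kappa(T)$ to be the minimum number of parts in any connected partition in which every part admits a sink with evacuation time at most $T$. Since enlarging $T$ only relaxes the constraint, $\kappa(T)$ is nonincreasing, so the optimum $T^\ast$ is the least $T$ with $\kappa(T)\le k$; this reduces everything to (i) a subroutine that evaluates the predicate $\kappa(T)\le k$, and (ii) a search that locates $T^\ast$ among the attainable values. For the feasibility subroutine I would root $G$ and sweep bottom-up in post-order, greedily growing each part as high as possible. Maintained at the current vertex is a description of the ``residual'' subtree not yet assigned a sink; whenever carrying it one edge higher --- or merging the several child-residuals at a branch vertex --- would force the part's best attainable evacuation time above $T$, I cut the offending child edge, finalize that part (charging it one sink), and continue. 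The greedy ``cut as late as possible'' rule is standard for min-max tree partition and is justified by a short exchange argument, using the monotonicity that a smaller evacuation region cannot require more time than a region containing it; the delicate point is deciding, at each merge, whether a growing part is still servable within $T$ by \emph{some} interior sink, which is where the oracle is invoked on the affected subtrees.

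The hard part will be making both the residual evaluation and the outer search efficient. The evacuation time is not an additive weight but a congestion-dependent $\max$ over path segments of length-plus-queueing terms, so the residual of a subtree cannot be summarized by a single scalar and must be carried as a compact profile that supports merging across a branch vertex. Keeping these profiles of polylogarithmic size and updating them in $O(\mathrm{polylog}\,n)$ per vertex --- rather than recomputing the oracle from scratch at every candidate cut --- is the crux of staying within budget and accounts for the extra logarithmic factors over the raw $O(n\log^2 n)$ oracle. For the outer search the set of candidate values of $T^\ast$ is implicitly large, since each is the $1$-sink time of some subtree--sink pair, so rather than enumerate them I would organize the candidates as an implicit sorted structure and apply a selection/binary search that evaluates the feasibility predicate only $O(\log n)$ times. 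Combining the per-test cost with the dependence on $k$ coming from the $O(k)$ parts produced and from bracketing $\kappa(T)$ against $k$ is what I expect to yield the claimed $O(nk^2\log^5 n)$ running time.
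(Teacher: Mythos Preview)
Your high-level plan---reduce to a min-max connected partition, build a monotone feasibility test, and wrap it in a search for $T^\ast$---is the same skeleton the paper uses, and your bottom-up ``cut as late as possible'' greedy is essentially the paper's \emph{peaking criterion}. But two of the efficiency mechanisms you rely on are not justified, and they are precisely where the bound comes from.

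First, the ``compact profiles of polylogarithmic size'' that you say will let you avoid rerunning the oracle are not shown to exist, and the paper does nothing of the kind: it treats the $1$-sink oracle strictly as a black box. The speedup instead comes from \emph{amortization}: the peaking-criterion tests are organized into $O(\log n)$ layers by repeated tree-median decomposition, with the oracle in each layer invoked only on edge-disjoint subtrees totaling at most $n-1$ edges, so one layer costs one ``amortized'' oracle call. You also omit the second phase entirely---the \emph{reaching criterion}---which, once several sinks sit at the leaves of a hub tree, decides which remaining vertices drain to which sink by binary search along hub-to-hub paths; without it your greedy is incomplete, and your ``cut the offending child edge'' rule at a branch vertex is ambiguous (which child, and why is that choice optimal?). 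Second, the outer search cannot be a binary search over a pre-enumerated sorted set: the candidate values of $T^\ast$ are $1$-sink times of subtrees that are only revealed while the feasibility algorithm runs. The paper therefore runs the feasibility test \emph{parametrically}, resolving each comparison $f(\cdot,\cdot)\le \mathcal{T}$ that lands in the current uncertainty interval by launching a fresh feasibility test; the layered structure lets one batch and binary-search these values within a layer, yielding $O(k\log^2 n)$ feasibility calls overall. Combining $O(k\log^2 n)$ calls, $O(k\log n)$ amortized oracle invocations per call, and $O(n\log^2 n)$ per oracle invocation gives $O(nk^2\log^5 n)$; your claimed $O(\log n)$ feasibility evaluations would not reproduce this accounting.
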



It is instructive to compare our approach to Frederickson's  \cite{frederickson1991parametric}  $O(n)$ algorithm for solving the $k$-center problem on trees, which was built from the  following two ingredients.

\begin{enumerate}
\item An $O(n)$ time previously known algorithm for checking {\em feasibility}, i.e., given $\alpha >0$, testing whether a  $k$-center solution with cost $\le \alpha$ {\em  exists}
\item A clever {\em parametric search} method to filter the $O(n^2)$ pairwise distances between nodes, one of which is the optimal cost, via the feasibility test.
\end{enumerate}

Section \ref{Section: Bounded Cost}, is devoted to constructing a first polynomial time feasibility test for $k$-sink evacuation on trees. It starts with a simple version that makes a ploynomial number of oracle calls and then is extensively refined so as to make only $O(k \log n)$ (amortized) calls.

On the other hand, there is no small set of easily defined cost values known to contain the optimal solution.  We sidestep this issue by doing parametric searching {\em within} our feasibility testing algorithm, Section \ref{Section: Full Problem}, which leads to Theorem \ref{theorem:kSinkRunTime}. 

As  a side result,  a slight modification to our algorithm allows improving, for almost all $k$,  the best previously  known algorithm for solving the problem when the $k$-sink locations are already given,  from  $O(n^2 k \log^2 n)$ \cite{Mamada2005}  down to  $O(n k^2 \log^4 n)$.



\section{Formal definition of the sink evacuation problem}

Let $G=(V,E)$ be an undirected graph. Each  edge $e=(u,v)$   has a travel time $\tau_e$;  flow leaving  $u$ at time $t=t_0$ arrives at $v$ at time $t=t_0 + \tau_e.$ 
Each edge also has  a
{\em capacity} $c_e \ge 0$. This  restricts at most  $c_e$ units of resource to enter edge $e$  in one unit of time. For our version of the problem we restrict $c$ to be integral; the capacity can then be visualized as the {\em width} of the edge with only $c_e$ people being allowed to travel in parallel along  the edge.

Consider $w_u$ people waiting at vertex $u$ at time $t=0$ to travel the edge $e=(u,v).$ Only $c_e$ people  enter the edge in one time unit, so the items travel in $\lceil w_u/c_e\rceil$ packets, each of size $c_e$,  except possibly for the last one.  
The first packet enters $e$ at time $t=0,$ the second at time $t=1$, etc..
The first packet therefore reaches $v$ at time $t=\tau_e$ time, the second at $t=\tau_e +1$ and the last one at time $t=\tau_e + \lceil w_u/c_e \rceil -1$. Figure 
\ref{fig:evacs}(a)  illustrates this process.  In the diagram  people  get moved along $(u,v)$ in groups of size at most  6.  If $w_u = 20$, there are 4 groups; the first one reaches $v$ at time  $t=10$, the second  at time $t=11$,  the third at $t=12$ and the last  one (with only 2 people)  at $t=13.$  


\begin{figure}[t]
	\begin{subfigure}[h]{0.35\textwidth}
		\includegraphics[width=\textwidth]{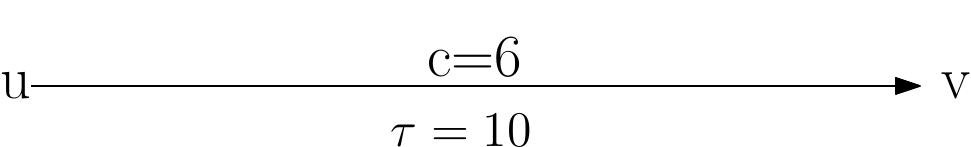}
		\caption{}
	\end{subfigure}%
	\hfill
	\begin{subfigure}[h]{0.6\textwidth}
		\includegraphics[width=\textwidth]{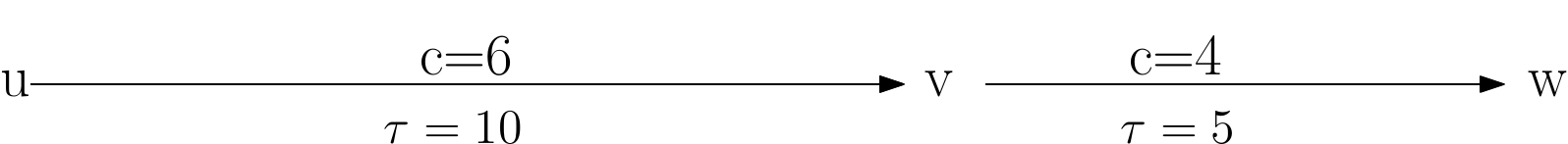}
		\caption{}
	\end{subfigure}
	\caption
	{\small In (a),  if $w_u=20$ the last person leaving $u$ arrives at $v$ at time $t=13$.
		In (b) Assume people at $u,v$ are all evacuating to $w$ and $w_u=20$ and $w_v > 0$.
		The first person from $u$ arrives at $v$ at time $t=11$.
		If $w_v \le 40$ all of the people on $v$ enter $(v,w)$  {\em before or at } $t=10$, so  there will  be no congestion  when the first people from $u$ arrive at $v$ and  they just sail through $v$ without stopping.  Calculation shows that the last people from $u$ reach $w$ at time $t=20.$
		On the other hand, if $w_v > 40,$ some people who started at $v$ will still be waiting at $v$ when the first people from $u$ arrive there.
		In this case, there is congestion and the people from $u$ will have to wait.  A little calculation shows that,  after waiting, the last person from $u$ will finally arrive at $w$ at time $ 14 + \lfloor (20+w_v)/4\rfloor$.
	}
	\label{fig:evacs}
\end{figure}

Now suppose that items are travelling  along a path $\ldots u \rightarrow v \rightarrow w \rightarrow \ldots$ where $e_1 = (u,v)$ and $e_2=(v,w)$. Items arriving at $v$ can't enter $e_2$ until the items already there have left.  This waiting causes {\em congestion} which is one of the major complications involved in constructing good evacuation paths. Figure 
\ref{fig:evacs}(b) 
illustrates how congestion can build up.


As another example, consider Figure  \ref{fig:newevac}(a) with every node evacuating to $w$.
When the first people from $u_1$ arrive at $v$,  some of the original people still remain there,  leading to congestion.  Calculation shows that the last people from $u_1$ leave $v$ at time 4 so when the first  people from $u_2$ arrive at $v$ at time 5, no one is waiting at $v$.  But, when the first people from $u_3$ arrive at $v$ some people from $u_2$ are waiting  there, causing congestion. After that, people arrive   from both $u_2$ and $u_3$ at the same time, with many having to wait.  The last person finally reaches $w$ at time 15, so the evacuation protocol takes time 15.

\begin{figure}
	\centering
		\begin{subfigure}[h]{0.5\textwidth}
			\includegraphics[width=\textwidth]{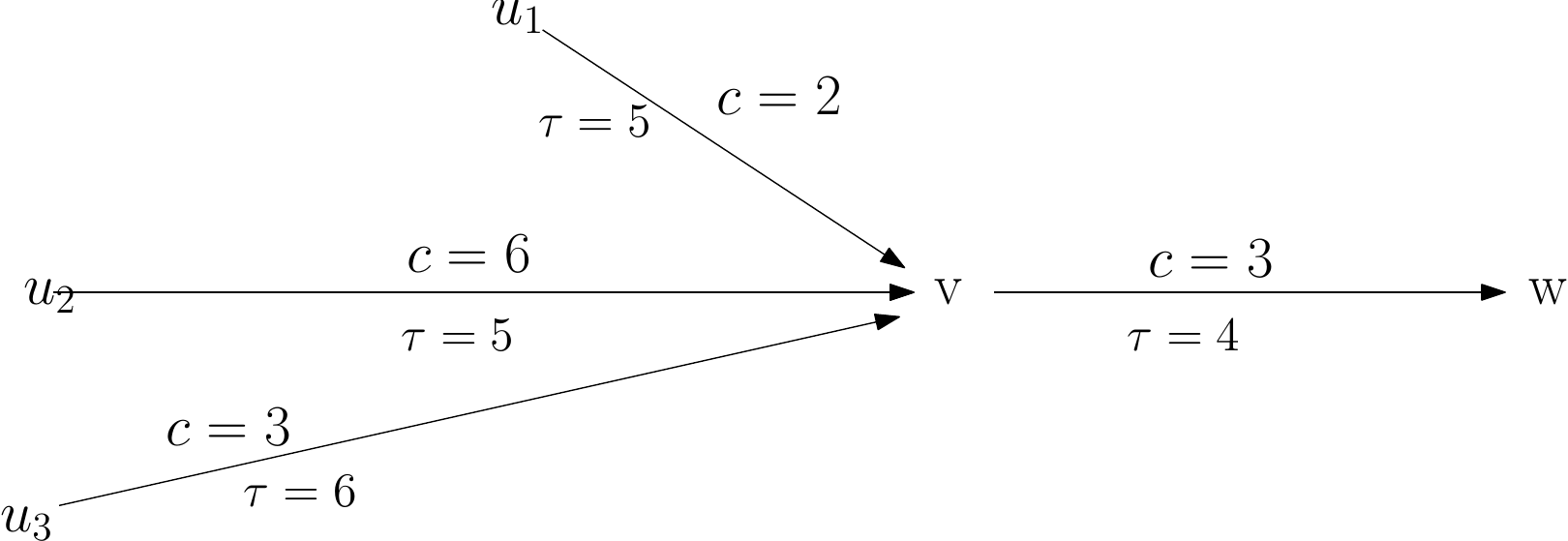}
			\caption{}
		\end{subfigure}%
		\begin{subfigure}[h]{0.5\textwidth}
			\includegraphics[width=\textwidth]{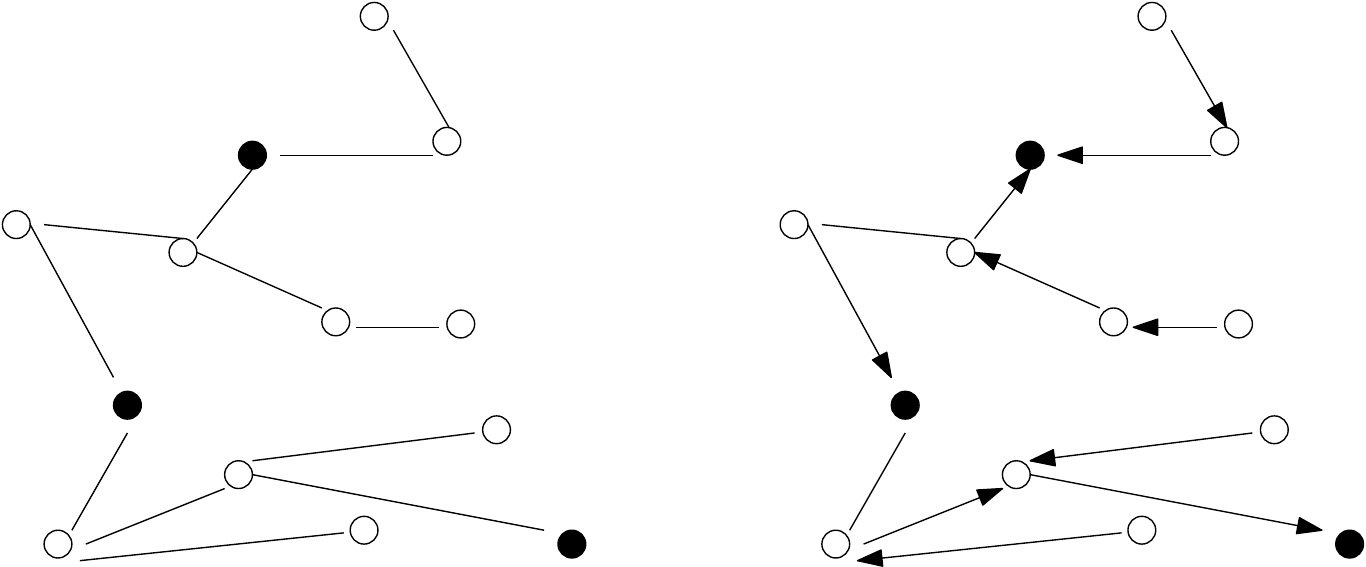}
			\caption{}
		\end{subfigure}
		\caption{ \small   (a) evacuation problem with 4 vertices and sink at $w$. If  $w_v=8$, $w_{u_1}=4$, $w_{u_2}=10$, $w_{u_3}=11$ and sink at $w$,  the last person arrives $w$ at time 15. In (b) the
			left figure is a tree with the $k=3$ black vertices as sinks.  The right figure  provides an evacuation plan.   Each non-sink vertex $v$ has exactly one outgoing edge and, following the directed edges from each such $v$ leads to a sink.}
		\label{fig:newevac}
\end{figure}


Given a graph $G$, distinguish a subset $S \subseteq V$ with $|S|=k$  as sinks (exits).
An evacuation plan provides, for each vertex $v\not\in S$,  the unique edge along which all people starting at or passing through $v$ evacuate.  Furthermore, starting at any $v$ and following  the edges will lead from $v$ to one of the $S$  (if $v \in S$, people at $v$ evacuate immediately through the exit at $v$). Figure \ref{fig:newevac}(b)  provides an example.


Note that the evacuation plan defines a confluent flow.  The evacuation edges form a directed forest;  the sink of each tree in the forest is  one of the designated sinks in $S.$. 

Given the evacuation plan and the values $w_v$ specifying the initial number of people located at each node,  one can calculate, for each vertex, the  time (with congestion) it takes for all of its people to evacuate.  The maximum of this over all $v$  is the minimum time required to required to evacuate  {\em all}  people  to some exit  using the rules above. Call this the {\em cost for $S$ associated with the evacuation plan}.  The {\em cost} for $S$ will be the minimum cost over all evacuation plans using that set $S$ as sinks.

The {\em $k$-sink location} problem is to find a subset $S$ of size $k$ with minimum cost. Recall that \cite{Mamada2005}  provides an $O(n \log^2 n)$ problem for solving this problem if for tree  $G$ with  $k=1$.  We will use this algorithm as an oracle for solving the general $k$-location problem on trees.

Given the hardness results, it is unlikely one can produce an efficient algorithm for general graphs, but our algorithms can serve as fast subroutines for exhaustive search or heuristic methods commonly employed in practice.
%

\subsection{General problem formulation}

The input to our algorithm(s) will be  a tree $T_{\mathrm{in}} = (V_{\mathrm{in}},E_{\mathrm{in}})$,  and a positive integer $k$. Let $n = |V_{\mathrm{in}}| = |E_{\mathrm{in}}| + 1$. Our goal will be to find a subset  $S \subseteq V_{\mathrm{in}}$ with cardinality at most $k$ that can minimize cost  $F(S)$.  This will essentially involve partitioning the $V_{\mathrm{in}}$  into $\le k$ subtrees that minimizes their individual max costs.


 We note that  our algorithms will not explicitly deal with the mechanics of evacuation calculations.  Instead they will solve the location problem for any {\em monotone min-max cost $F(S)$}. 
   We introduce this level of abstraction because using the clean  properties of monotone min-max cost  functions makes the algorithms easier to formulate and understand.

%

\subsubsection{Monotone min-max cost.}
We now extract the properties of $F(S)$ that we will use.  All of these will be consistent with evacuation time.
Let $\Lambda[S]$ be the set of all partitions of $V_{\mathrm{in}}$ such that, for each $\mathcal{P} \in \Lambda[S]$, and each $P \in \mathcal{P}$, we have $|S \cap P| = 1$, and $P$ induces a connected component in $T_{\mathrm{in}}$.

Intuitively, each $\mathcal{P} \in \Lambda[S]$ is a partition of $T_{\mathrm{in}}$ into $|S|$ subtrees, such that each subtree includes exactly $1$ element in $S$. For any $P \subseteq V_{\mathrm{in}}$ s.t. $ | S \cap P | = 1$, we denote by $\langle S \cap P \rangle$ the unique node $v \in S \cap P$. We say that nodes in $P$ are assigned to the \emph{sink} $\langle S \cap P \rangle$.

Now we define an \emph{atomic cost function} $f : 2^{V_{\mathrm{in}}} \times V_{\mathrm{in}} \rightarrow [0,+\infty]$. In the context of facility location problems, given $P \subseteq V_{\mathrm{in}}$ and $|S \cap P| = 1$, $f(P,\langle S \cap P \rangle)$ can be interpreted as the cost for sink $\langle S \cap P \rangle$ to serve the set of nodes $P$. The definition of $f$ involves some natural constraints on cost functions for facility location on trees, which are given as follows. 
\begin{enumerate}
\item For $U \subseteq V_{\mathrm{in}}$, $v \in V_{\mathrm{in}}$,
  \begin{itemize}
  \item if $v \notin U$ then $f(U,v) = +\infty$;
  \item if $U$ does not induce a connected component, $f(U,v) = +\infty$.
  \item if $U = \{v\}$, then $f(U,v) = 0$.
  \end{itemize}
\item (Set monotonicity) If $v \in U_1 \subseteq U_2 \subseteq V_{\mathrm{in}}$, then $f(U_1,v) \leq f(U_2,v)$. i.e. the cost tends to increase when a sink has to serve additional nodes.

\item (Path monotonicity) Let $u \in U$ and let $v \notin U$ be a neighbor of $u$ in $T_{\mathrm{in}}$. Then $f(U \cup \{ v \},v) \geq f(U,u)$. Intuitively, this means when we move a sink away from $U$, the cost for the sink to serve $U$ tends to increase.

\item (Max composition) Let $T = (U,E')$ be a connected component in $G$, and $v \in U$. Let $\mathcal{F} = \{ T_1,...,T_l \}$ be the forest created by removing $v$ from $T$, and the respective vertices of each tree in $\mathcal{F}$ be $U_1,...,U_l$. Then $f(U,v) = \max_{1\leq i\leq l} f(U_i \cup \{ v\},v)$.
\end{enumerate}
Note  that we have only defined a cost function over one single set and one single sink. This can then be extended to a function $F_S : \Lambda[S] \rightarrow [0,+\infty]$ by setting, for $\mathcal{P} \in \Lambda[S]$:
\begin{equation}
F_S(\mathcal{P}) = \max_{P \in \mathcal{P}} f(P,\langle S \cap P \rangle)
\label{eq:MaxCompositionB}
\end{equation}
In other words, given a partition $\mathcal{P}$, the total cost for sinks $S$ to serve all partitioned blocks is the maximum of the cost to serve each block. It will be  cumbersome to discuss explicit partitioning, so we will informally denote it by saying that a node $u \in V_{\mathrm{in}}$ is {\em assigned} to a sink $s \in S$. Then, given sinks $S$, we partition $G$ in a way that the total cost is minimized, giving the cost function as:
\begin{equation}
 F(S) = \min_{\mathcal{P} \in \Lambda(S)} F_S(\mathcal{P})
\end{equation}
We call such cost function $F$ minmax monotone. See the appendix for an illustration of (\ref{eq:MaxCompositionB}). $k$-center and sink evacuation will fit into this framework, as well as variations involving node capacities, uniform edge capacity, or confluent unsplittable flows.
 Our main  problem will be to find an $S$ which minimizes $F(S)$ over all $|S| = k$.   

Our algorithms are designed to make calls directly to an oracle $\mathcal{A}$ that computes $f(U,v)$ given any $U$ that induces a connected component of $T_{\mathrm{in}}$ and any $v \in U$. In general such a polynomial time oracle must exist for the problem to even be in NP. 

\section{Overview}

In the rest of the paper, we will describe two versions of our algorithms. In either version we require a feasibility test, which solves a simplified, \emph{bounded cost} version of the problem.

\begin{table}[H]
	\centering
	\begin{tabular}{ |r|p{0.7\textwidth}| }
		\hline
		{\bfseries Problem} & Bounded cost minmax $k$-sink \\
		\hline
		Input & Tree $T_{\mathrm{in}}=(V_{\mathrm{in}},E_{\mathrm{in}})$, $k \geq 1$, $\mathcal{T} \geq 0$\\
		Output & $S_{\mathrm{out}} \subseteq V_{\mathrm{in}}$ and $\mathcal{P_{\mathrm{out}}} \in \Lambda[S_{\mathrm{out}}]$ s.t. $|S_{\mathrm{out}}| \leq k$ and $F_{S_{\mathrm{out}}}(\mathcal{P_{\mathrm{out}}}) \leq \mathcal{T}$.\ \ \ 
		If such  a  $(S_{\mathrm{out}}, \mathcal{P_{\mathrm{out}}})$  pair does not exist, output `No'. \\
		\hline
	\end{tabular}
\end{table}

We will use an algorithm solving  this problem  as a subroutine for solving the full problem; we measure the time complexity by the number of calls to the oracle $\mathcal{A}$. The fastest runtime we can obtain is given as follows.

\begin{theorem}
	If $\mathcal{A}$ runs in time $t_\mathcal{A}(n)$, the bounded cost minmax $k$-sink problem can be solved in time $O(k t_\mathcal{A}(n) \log n)$ if $t_\mathbb{A}(n)$ is at least linear time. \label{theorem:FastBC}
\end{theorem}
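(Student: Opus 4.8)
The plan is to reduce the decision problem to computing the \emph{minimum} number $\kappa(\mathcal{T})$ of sinks for which $V_{\mathrm{in}}$ admits a partition into connected blocks each of atomic cost at most $\mathcal{T}$, and to answer ``yes'' exactly when $\kappa(\mathcal{T}) \le k$. To compute $\kappa(\mathcal{T})$ I would root $T_{\mathrm{in}}$ at an arbitrary vertex $r$ and process vertices in post-order, deferring each sink placement as far toward the root as the budget allows --- this mirrors the linear-time feasibility test underlying Frederickson's $k$-center algorithm. Concretely, I maintain for each processed subtree $T_v$ a single \emph{pending frontier}: a connected set $Q_v$ containing $v$, consisting of the vertices of $T_v$ not yet assigned to an already-committed sink, subject to the invariant $f(Q_v,v)\le\mathcal{T}$. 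At a vertex $v$ with children $c_1,\dots,c_d$ carrying frontiers $Q_{c_1},\dots,Q_{c_d}$, I test $f(Q_{c_i}\cup\{v\},v)\le\mathcal{T}$ for each $i$; children that pass are merged into $Q_v := \{v\}\cup\bigcup_{\text{pass}}Q_{c_i}$, while each child that fails forces a sink to be committed at $c_i$ (serving exactly $Q_{c_i}$). The \emph{max composition} property guarantees $f(Q_v,v)=\max_{\text{pass}}f(Q_{c_i}\cup\{v\},v)\le\mathcal{T}$, so the invariant survives the merge; a final sink at $r$ serves $Q_r$. The committed sinks and their blocks are exactly the desired $(S_{\mathrm{out}},\mathcal{P}_{\mathrm{out}})$.

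For correctness I would establish two facts. First, the invariant (each $Q_v$ connected, containing $v$, with $f(Q_v,v)\le\mathcal{T}$) holds by induction, using \emph{max composition} at the merge step. Second, the greedy uses exactly $\kappa(\mathcal{T})$ sinks. The crux here is to show that each forced placement is unavoidable: if $f(Q_{c_i}\cup\{v\},v)>\mathcal{T}$, then any sink $s$ lying outside $T_{c_i}$ can only serve $Q_{c_i}$ through a connected block entering $T_{c_i}$ via the edge $(v,c_i)$, so that block contains $v$ together with the path from $v$ out to $s$. Iterating \emph{path monotonicity} along that path gives $f\bigl(Q_{c_i}\cup \mathrm{path}(v,s),\,s\bigr)\ge f(Q_{c_i}\cup\{v\},v)>\mathcal{T}$, and \emph{set monotonicity} then forces the cost of the full block at $s$ to exceed $\mathcal{T}$. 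Hence every feasible solution must place a sink inside $T_{c_i}$; peeling off each forced region and inducting shows the greedy is optimal. Running it while capping the committed count at $k+1$ therefore decides the problem.

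The naive implementation spends one oracle call per parent--child pair, i.e.\ $O(n)$ calls; the real work is to reach $O(k\log n)$. The key structural observation is that along any root-directed \emph{chain} of vertices whose frontier is carried upward, the accumulated frontier is always one connected blob attached at the current chain vertex, and by \emph{set monotonicity} together with \emph{path monotonicity} the value $f(\text{blob}\cup\{a\},a)$ is nondecreasing as $a$ climbs the chain. Consequently the unique vertex at which deferral first fails --- the forced sink location --- can be located by binary search in $O(\log n)$ oracle calls rather than one call per chain vertex; note that a folded-in branch that later becomes too expensive from a higher ancestor simply makes the whole blob fail there, so branch-offs are automatically accounted for by the same monotone test. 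I would then combine this with a path decomposition of $T_{\mathrm{in}}$ so the ascent splits into $O(\log n)$ maximal chains, binary-search each, and charge the resulting calls to the sink ultimately committed; since at most $k$ sinks are committed before we abort, the total is $O(k\log n)$ oracle calls, each costing $t_\mathcal{A}(n)$, with all $O(n)$ and $O(n\log n)$ bookkeeping absorbed because $t_\mathcal{A}(n)$ is at least linear.

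I expect the main obstacle to be exactly this amortized accounting. A single long chain may have many light subtrees branching off it, and testing each branch-off individually would reintroduce the $O(n)$ cost I am trying to eliminate; yet a branch-off that fails must spawn its own sink, while one that passes must be folded into the blob without an explicit per-branch oracle call. The delicate points are (i) proving that binary search may legitimately \emph{skip} chain vertices --- i.e.\ that a skipped vertex is provably never a forced-placement point --- despite intermediate branch-offs altering the blob, and (ii) designing the charging so that all folded-in and all separately-sinked branches are paid for, in aggregate, by only $O(k)$ committed sinks at an $O(\log n)$ rate each. Making this scheme precise is the heart of the refinement.
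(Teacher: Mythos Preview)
Your greedy is not correct for the general monotone min--max cost $f$ the theorem is stated for. The test $f(Q_{c_i}\cup\{v\},v)\le\mathcal{T}$ asks whether the frontier can be served by a sink \emph{at the parent $v$}; when this fails you commit $Q_{c_i}$ with its sink at $c_i$ and move on. Your exchange argument correctly shows that some sink must then lie in $T_{c_i}$, but it does \emph{not} show that this sink serves only vertices of $T_{c_i}$. A sink $s$ inside $T_{c_i}$ may legitimately serve $v$ and vertices beyond, because path monotonicity bounds $f(U\cup\{v\},v)\ge f(U,u)$ in only one direction: it says nothing about $f(Q_{c_i}\cup\{v\},s)$ for $s$ deep inside $Q_{c_i}$. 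The Frederickson greedy you are imitating works for $k$-center only because the metric gives an extra symmetry that is absent here.

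A three-vertex path $a\text{--}b\text{--}c$ already breaks the scheme. Take $\mathcal{T}=5$ and, consistently with all four axioms, set $f(\{a,b\},b)=6$, $f(\{a,b\},a)=3$, $f(\{b,c\},b)=1$, $f(\{b,c\},c)=2$, $f(\{a,b,c\},a)=4$ (forcing $f(\{a,b,c\},b)=6$ and $f(\{a,b,c\},c)\ge 6$). Rooted at $c$, your greedy sees $f(\{a,b\},b)=6>5$, commits a sink at $a$ serving only $\{a\}$, and then needs a second sink for $\{b,c\}$. But a single sink at $a$ serves the whole path with cost $4$, so $\kappa(\mathcal{T})=1$ and your feasibility test wrongly answers ``No'' for $k=1$. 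This asymmetry is realizable in the evacuation model itself (put the large population at $a$).

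The paper's feasibility test is built precisely around this phenomenon. Its \emph{peaking criterion} coincides with your test and does place a sink at $a$ in the example; but the algorithm then switches to a \emph{reaching criterion}, which asks how far \emph{toward the root} an already-placed sink $s$ can serve, via tests of the form $f(\mathrm{BP}(v,s),s)\le\mathcal{T}$ with the sink held fixed at $s$ rather than moved to $v$. On the example this discovers that $a$ serves all of $\{a,b,c\}$ and no second sink is opened. Both criteria are needed for correctness, and the $O(k\log n)$ bound comes from accelerating each one separately: a tree-median compartment recursion for the peaking criterion, and binary search along hub-to-hub paths (with a rejection/acceptance charging over at most $k$ sinks) for the reaching criterion. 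Your proposal is missing the second criterion entirely, so both the correctness argument and the running-time analysis have to be redone.
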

If $\mathcal{A}$ is sublinear, we can replace it with a linear time oracle to get $O(k n  \log n)$. We will establish several important ingredients that leads to $O(n)$ calls, i.e. a time complexity fo $O(t_\mathcal{A}(n) n)$; the same ingredients will be used in the more complicated algorithm that gives Theorem ~\ref{theorem:FastBC}.

\section{Bounded cost $k$-sink (feasibility test)}
\label{Section: Bounded Cost}

\begin{definition}
A {\em feasible configuration} is a set of sinks $S \subseteq V$ with a partition $\mathcal{P} \in \Lambda(S)$ where $F_S(\mathcal{P}) \leq \mathcal{T}$; $S$ is also separately called a {\em feasible sink placement}, and $\mathcal{P}$ is {\em a partition  witnessing the feasibility of $S$}. An optimal feasible configuration is a {\em  feasible sink placement $S^* \subseteq V$ with minimum cardinality}; we write $k^* := |S^*|$.
\end{definition}

If $k^* > k$ then the algorithm returns `No'. Otherwise, it returns a feasible configuration $(S_{\mathrm{out}},\mathcal{P}_\mathrm{out})$ such that $|S_{\mathrm{out}}| \leq k$.

\begin{definition}
Suppose $U$ induces a subtree of $T_\mathrm{in}$ and $S \subseteq U$. We say $U$ can be {\em served by $S$} if, for some partition $\mathcal{P}$ of $U$, for each $P \in \mathcal{P}$ there exists $s \in S$ such that $f(P,s) \leq \mathcal{T}$.
\end{definition}

\begin{definition}
  Let $U$ be the nodes of a connected component of $G$ and $v\in V$ (not necessarily in $U$). We say that $v$ {\em supports} $U$ if one of the following holds:  
  \begin{itemize}
    \item If $v \in U$, then $f(U,v) \leq \mathcal{T}$.
    \item If $v \notin U$, let $\Pi$ be the set of nodes on the path from $v$ to $U$. Then $f(U \cup \{v\} \cup \Pi, v) \leq \mathcal{T}$.
  \end{itemize}
\end{definition}

If $U$ can be served by $S$, then any node in $U$ is supported by some $s \in S$. The converse is not necessarily true.

\subsection{Greedy construction}

Our algorithms  greedily build $S_{\mathrm{out}}$ and $\mathcal{P}_{\mathrm{out}}$ on-the-fly, making irrevocable  decisions on what should be in the output. $S_{\mathrm{out}}$ is initialized to be empty. In each step, we add elements to $S_{\mathrm{out}}$ but never remove them, and once $|S_{\mathrm{out}}| > k$ we immediately terminate with a `No'. If, at termination, $|S_{\mathrm{out}}| \leq k$, we output $S_{\mathrm{out}}$.

Similarly, $\mathcal{P}_{\mathrm{out}}$ is initially empty, and the algorithm performs irrevocable  updates to $\mathcal{P}_{\mathrm{out}}$ while running. An update to $\mathcal{P}_{\mathrm{out}}$ is a \emph{commit}. When set $P_\mathrm{new} \subseteq V_{\mathrm{in}}$ is committed it is associated with some some sink in $S_{\mathrm{out}}$ (which might have to be added to $S_{\mathrm{out}}$ at the same time).  If $P_\mathrm{new}$  shares its  sink with an existing block $P \in \mathcal{P}_\mathrm{out}$, we merge $P_\mathrm{new}$ into $P$.  Another way to view this operation is that either a new sink is added, or unassigned nodes are assigned to an existing sink.

\begin{algorithm}[h]
	\begin{algorithmic}[1]
		\State Given $\mathcal{P}_{\mathrm{out}}$
		
		\Procedure{Commit}{$P_\mathrm{new} \subseteq V_{\mathrm{in}}$}
		    \If{$|P_{new} \cap S_\mathrm{out}| = 1$}
		      \If{$\exists P \in \mathcal{P}_{\mathrm{out}}$ s.t. $P_\mathrm{new} \cap P \neq \emptyset$}
		      \State $\mathcal{P}_{\mathrm{out}} := \mathcal{P}_{\mathrm{out}} \cup \{ P \cup P_\mathrm{new} \} - \{ P \}$
		      \Else
		      \State $\mathcal{P}_{\mathrm{out}} := \mathcal{P}_{\mathrm{out}} \cup \{ P_\mathrm{new} \}$
		      \EndIf
		    \EndIf
		\EndProcedure

	\end{algorithmic}
	\caption{Committing block}
	\label{alg:CommitBlock}
\end{algorithm}

In essence, we avoid backtracking so that $S_\mathrm{out}$ does not lose elements, and blocks added to $\mathcal{P}_\mathrm{out}$ can only grow. For this to work, we must require, throughout the algorithm:

\begin{enumerate}[label=(\textrm{C}\arabic{*})]
 \item An optimal feasible sink placement $S^*$ exists where $S_{\mathrm{out}} \subseteq S^*$. \label{invariant:test}
 \item For any $P \in \mathcal{P}_{\mathrm{out}}$ there exists a unique $s \in S_{\mathrm{out}}$ such that $|P \cap S| = 1$, and $f(P,s) \leq \mathcal{T}$.
\end{enumerate}
 
 Additionally, $\mathcal{P}_{\mathrm{out}}$ will be a partition of $V_\mathrm{in}$ upon termination with `yes'. When these conditions all hold, then
 $|\mathcal{P}_\mathrm{out}| \leq k$  and $(S_{\mathrm{out}}, \mathcal{P}_\mathrm{out})$ is feasible and  output by the algorithm.




\subsubsection{A separation argument.}

As the algorithm progresses, it removes nodes from the remaining graph (the {\em working tree}), simplifying the combinatorial structure. Roughly speaking, a sink can be removed if we can identify all nodes it has to serve; upon removing the sink, all nodes it serves can also be removed from the tree. We will need the definitions below:

\begin{definition}[Self sufficiency and $T_{-v}(u)$]
A subtree $T' = (V',E')$ of $T_{\mathrm{in}}$ is {\em self-sufficient} if $V'$ can be served by $S_{\mathrm{out}} \cap V'$. 
	
Given a tree $T = (V,E)$, consider an internal node $v \in V$
and one of its neighbors $u \in V$.  Removing $v$ from $T$ leaves  a forest $\mathcal{F}_{-v}$ of disjoint subtrees of $T$.

There is a unique tree $T' = (V',E') \in \mathcal{F}_{-v}$ such that $u \in V'$, denoted by $T_{-v}(u) = (V_{-v}(u), E_{-v}(u))$. The concept of self sufficiency 
is introduced for subtree of this form.
\end{definition}

If $T_{-v}(u)$ is self-sufficient, and $u$ is a sink, 
%
 there is no need to add any other sinks to $T_{-v}(u)$, also no node oustside $T_{-v}(u)$ will be routed to any sink in $T_{-v}(u)$ other than $u$. 
This means all nodes in $V_{-v}(u)$ except $u$ can be removed from consideration; 
a more formal statement of this fact  is given as follows.

Given subtrees $T_1 = (V_1,E_1)$ and $T_2 = (V_2,E_2)$ of $T_{\mathrm{in}}$, we denote by $T_1 \backslash T_2$ the graph induced by $V_1 \backslash V_2$.

\begin{lemma}
	Given $s \in S_{\mathrm{out}}$, suppose $u \in V_{\mathrm{in}}$ is a neighbor of $s$ in $T_{\mathrm{in}}$. Consider the subtree $T' = (V',E')$ induced by vertices $(V_\mathrm{out})_{-s}(u) \cup \{s\}$, and suppose $T'$ is self-sufficient. Then the following are equivalent.
	
	
	\begin{enumerate}
		\item There exists $S_1^* \subseteq V_\mathrm{in}$ such that $S_\mathrm{out} \subseteq S_1^*$, $F(S_1^*) \leq \mathcal{T}$, and $|S_1^*| \leq k$.
		\item There exists $S_2^* \subseteq (V_\mathrm{in} \backslash V') \cup \{s\}$ such that $(S_\mathrm{out} \backslash V') \cup \{s\} \subseteq S_2^*$, $F(S_2^*) \leq \mathcal{T}$ when restricted to $(V_\mathrm{in} \backslash V') \cup \{s\}$, and $|S_2^*| + |S_{\mathrm{out}} \cap V'| - 1\leq k$
	\end{enumerate}
	
	In other words, we can ignore all nodes (including sinks) that are in $V' \backslash \{ s \}$, and solve the problem on the subtree induced by $(V_{\mathrm{in}} \backslash V') \cup \{ s \}$.
	\label{lemma:Separation}
\end{lemma}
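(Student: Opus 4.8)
The plan is to prove the equivalence by two implications, with the whole argument resting on one structural fact: $s$ is a \emph{cut vertex} separating $V'\setminus\{s\}$ from $V_{\mathrm{in}}\setminus V'$. Indeed, $V' = (V_\mathrm{out})_{-s}(u)\cup\{s\}$ attaches to the rest of the tree only through $s$, and in particular $s$ has exactly one neighbour, namely $u$, inside $T'$ (so $s$ has degree one in $T'$). Combining this separation with set monotonicity, path monotonicity, and Max composition is what will let me move a witnessing partition back and forth between the full tree and the reduced tree. Note the two hypotheses are used asymmetrically: self-sufficiency of $T'$ is needed only for the lifting direction $(2)\Rightarrow(1)$, while $s\in S_{\mathrm{out}}$ is what pins the sink that both sides must share.

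\textbf{Direction $(1)\Rightarrow(2)$ (restriction).} Given $S_1^*$ and a witnessing partition $\mathcal{P}_1\in\Lambda[S_1^*]$ with every block cost $\le\mathcal{T}$, the first step is a crossing claim: no block of $\mathcal{P}_1$ can contain both a node of $V'\setminus\{s\}$ and a node outside $V'$ without containing $s$. Any such block is connected, and since $s$ separates the two sides, the connecting path runs through $s$, forcing $s$ into the block; but $s\in S_{\mathrm{out}}\subseteq S_1^*$ is a sink and each block of a $\Lambda[S_1^*]$-partition carries exactly one sink, so $s$ must be that sink. Hence the only block meeting both sides is the one served by $s$. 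I would then set $S_2^* = S_1^*\cap\big((V_{\mathrm{in}}\setminus V')\cup\{s\}\big)$ and restrict $\mathcal{P}_1$ to the reduced tree: discard the blocks contained in $V'\setminus\{s\}$ and truncate the $s$-block to its reduced part $R_s$. Connectivity of $R_s$ follows from $s$ being a cut vertex, and the cost stays $\le\mathcal{T}$ by set monotonicity. The containments $(S_{\mathrm{out}}\setminus V')\cup\{s\}\subseteq S_2^*$ are immediate from $S_{\mathrm{out}}\subseteq S_1^*$, and the budget bound follows from $|S_1^*| = |S_2^*| + |S_1^*\cap(V'\setminus\{s\})|$ together with $S_{\mathrm{out}}\cap V'\subseteq S_1^*\cap V'$, which gives $|S_{\mathrm{out}}\cap V'|-1\le |S_1^*\cap(V'\setminus\{s\})|$ and hence $|S_2^*|+|S_{\mathrm{out}}\cap V'|-1\le|S_1^*|\le k$.

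\textbf{Direction $(2)\Rightarrow(1)$ (lifting).} Given a reduced solution $S_2^*$ with partition $\mathcal{P}_2$ and a self-sufficient service of $V'$ by $S_{\mathrm{out}}\cap V'$, I would set $S_1^* = S_2^*\cup(S_{\mathrm{out}}\cap V')$. Since $s$ lies in both sets, $|S_1^*| = |S_2^*| + |S_{\mathrm{out}}\cap V'| - 1\le k$, and $S_{\mathrm{out}}\subseteq S_1^*$ holds by the containment hypothesis on $S_2^*$. The partition $\mathcal{P}_1$ is obtained by gluing $\mathcal{P}_2$ (covering the reduced tree) to the $V'$-partition along the single shared vertex $s$. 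The one conflict is that $s$ is a block-sink on the reduced side (serving $R_s$) while also lying in $V'$. I would first arrange the self-sufficient partition so that $s$ serves the block $Q_s$ containing it: because $s$ has degree one in $T'$, I can peel $s$ off its current block and either leave it as the singleton $\{s\}$ (cost $0$) or let it serve $Q_s$, with the remaining pieces staying connected and keeping cost $\le\mathcal{T}$ by set monotonicity. Then $Q_s$ and $R_s$ merge into one block $B=Q_s\cup R_s$ whose unique sink is $s$.

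\textbf{Main obstacle.} The crux, and the step I expect to be hardest, is twofold: producing a genuine $\Lambda[S_1^*]$-partition (one sink per block, with \emph{every} sink of $S_{\mathrm{out}}\cap V'$ realized by its own block) from a mere ``can be served'' certificate, and then bounding the cost of the glued sink-$s$ block $B$. For the latter I would invoke Max composition at $s$: $f(B,s)$ equals the maximum of $f(B_i\cup\{s\},s)$ over the branches $B_i$ of $B$ hanging off $s$. The branches lying in the reduced tree are sub-branches of $R_s$, so their costs are $\le f(R_s,s)\le\mathcal{T}$ by set monotonicity; the branches lying in $V'$ have cost $\le\mathcal{T}$ precisely because $s$ was arranged to serve $Q_s$. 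Hence $f(B,s)\le\mathcal{T}$, all other blocks inherit cost $\le\mathcal{T}$ from $\mathcal{P}_2$ and the $V'$-partition, and so $F(S_1^*)\le\mathcal{T}$. Max composition is exactly the property that makes the glue cost-safe, while the degree-one structure of $s$ in $T'$ is what keeps every block connected through the peeling and merging; verifying these two points carefully is where the real work lies.
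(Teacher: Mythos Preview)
Your proposal is correct and follows essentially the same route as the paper: restrict the witnessing partition through the cut vertex $s$ for $(1)\Rightarrow(2)$, and glue the self-sufficiency partition of $T'$ to the reduced-tree partition at $s$ for $(2)\Rightarrow(1)$, invoking max composition to bound the merged $s$-block. Your write-up is in fact more careful than the paper's terse argument---you make explicit the peeling of $s$ to a degree-one sink in $T'$ and the need to refine a ``can be served'' certificate into a genuine $\Lambda[S_1^*]$-partition, both of which the paper passes over in silence.
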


\begin{proof}
	(i) $\Rightarrow$ (ii): Set $S^*_2 := S^*_1 \cap (V_{\mathrm{in}} \backslash V') \cup \{s\}$, and let $\mathcal{P}^*$ be the partition that witnesses the feasibility of $S^*_1$. Then for any $P \in \mathcal{P}^*$ such that $P \cap V' \neq \emptyset$, we have by max composition and (\ref{eq:MaxCompositionB}) that $\mathcal{T} \geq f(P, \langle P \cap S^*_1 \rangle) \geq f(P \cap V', \langle P \cap S^*_2 \rangle)$, so $V'$ can be served by $S^*_2$.
	
	(ii) $\Rightarrow$ (i): Let $\mathcal{P}_0$ be the partition of $T'$ that witnesses the self-sufficiency of $T'$, and let $\mathcal{P}_1$ be the partition of the subtree induced by $T'' := (V_\mathrm{in} \backslash V') \cup \{s\}$ that witnesses the self-sufficiency of $T''$. Take $S_1^* := S_2^* \cup (S_{\mathrm{out}} \cap V')$. Then $F(S_1^*) = \max_{\langle S_1^* \cap P \rangle: P \in \mathcal{P}_0 \cup \mathcal{P}_1} f(P,s)$ by max composition and (\ref{eq:MaxCompositionB}), which is at most $\leq \mathcal{T}$ by assumption. Also, because $s \in S_2^* \cap (S_{\mathrm{out}} \cap V')$, we know that $|S_1^*| \leq |S_2^*| + |S_{\mathrm{out}} \cap V'| - 1 \leq k$. \
\end{proof}

Throughout the algorithm, we maintain a `working' tree $T = (V,E)$ 
as well as a working set of sinks $S = S_\mathrm{out} \cap V$. 
Initially,   $T=T_{\mathrm{in}}$.  As the algorithm progresses, $T$ is maintained to be a subtree of $T_{\mathrm{in}}$ by peeling off self-sufficient subtrees. Lemma \ref{lemma:Separation} ensures that  solving the bounded problem on $T$ is equivalent to solving the bounded problem on $T_{\mathrm{in}}$.

To use Lemma \ref{lemma:Separation}, we enforce that  sink $s$  is added to $S_{\mathrm{out}}$ and $S$ only when, for some neighbor $u$ of $s$, the tree induced by $V_{-s}(u) \cup \{s\}$ is self-sufficient with respective to the sink set $S \cup \{s\}$. This permits removing $V_{-s}(u)$ from $T$ after adding $s$ to $S_{\mathrm{out}}$ and $S$. So in the algorithm
we can assume that sinks exist only at the leaves of the working tree $T$.

\subsection{Subroutine: Peaking Criterion}

%

We now  describe a convenient mechanism that allows us to greedily add sinks.

\begin{definition}[Peaking criterion]
 Given $T = (V,E)$, the ordered pair of points $(u,v) \in V \times V$ satisfies the {\em peaking criterion} (abbreviated {\em PC}) if and only if $u$ and $v$ are neighbors,  $T_{-v}(u)$ has no sink, and finally $f(V_{-v}(u), u) \leq \mathcal{T}$ but $f(V_{-v}(u) \cup \{ v \} , v) > \mathcal{T}$.
\end{definition}

\begin{lemma}
 Let $S$ be a feasible sink placement for $T$, and let $u,v \in V$ be neighbors. If $(u,v)$ satisfies the peaking criterion, then $S' := (S \backslash V_{-v}(u)) \cup \{ u \}$ is also a feasible sink placement. In particular, if $S$ is an optimal feasible sink placement, then so is $S'$.
 \label{lemma:PeakingCriterionPutSink}
\end{lemma}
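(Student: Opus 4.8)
My plan is to prove feasibility of $S'$ by exhibiting an explicit witnessing partition, and to deduce the optimality clause from a lower bound on how many sinks $S$ must place inside $V_{-v}(u)$. The technical engine for both parts is a single \emph{gateway monotonicity} inequality obtained by iterating path monotonicity: if $U$ induces a subtree, $w \in U$, and $z$ is a node whose tree path $\pi$ (taken as its vertex set, including both endpoints) to $w$ meets $U$ only at $w$, then $f(U,w) \leq f(U \cup \pi, z)$. I would prove this by induction on $|\pi|$, peeling off the neighbour of $z$ on $\pi$ and applying path monotonicity at each step. Combined with set monotonicity it gives $f(U,w) \leq f(P,z)$ whenever $U \cup \pi \subseteq P$ with $P$ connected and $z \in P$; intuitively, a sink sitting at the gateway $w$ serves $U$ no more expensively than a farther sink $z$ serves any connected superset.

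Next I would fix a partition $\mathcal{P}$ witnessing feasibility of $S$ and analyse it across the cut edge $(u,v)$. Since $(u,v)$ is the only edge joining $V_{-v}(u)$ to the rest of $T$, a block can straddle the cut only if it contains both $u$ and $v$; as at most one block contains $v$, there is at most one such \emph{crossing block} $P^{\ast}$. To build a witnessing partition $\mathcal{P}'$ for $S' = (S \backslash V_{-v}(u)) \cup \{u\}$, I would assign all of $V_{-v}(u)$ to the new sink $u$ (legal because $f(V_{-v}(u),u) \leq \mathcal{T}$ by PC), keep every block lying entirely outside $V_{-v}(u)$ with its original sink (which necessarily lies outside, and hence in $S'$), and absorb every block contained in $V_{-v}(u)$ into $u$'s block. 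The only genuine repair concerns $P^{\ast}$, and I would split on where its sink $s^{\ast}$ sits: if $s^{\ast} \notin V_{-v}(u)$, the outside remnant $P^{\ast} \backslash V_{-v}(u)$ stays with $s^{\ast}$, and set monotonicity keeps its cost $\leq \mathcal{T}$.

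The delicate case, which I expect to be the main obstacle, is when $s^{\ast} \in V_{-v}(u)$: deleting $s^{\ast}$ orphans the outside remnant $Q := P^{\ast} \backslash V_{-v}(u)$ (which contains $v$ but, since $P^{\ast} \cap S = \{s^{\ast}\}$, contains no sink of $S$). Here I would merge $Q$ into $u$'s block, serving $V_{-v}(u) \cup Q$ from $u$, and show this still costs $\leq \mathcal{T}$. Applying max composition at $u$, the cost equals $\max\{f(V_{-v}(u),u),\, f(Q \cup \{u\}, u)\}$; the first term is $\leq \mathcal{T}$ by PC, and the gateway inequality applied with $U = Q \cup \{u\}$, $w = u$, $z = s^{\ast}$ along the path inside $P^{\ast}$ bounds the second term by $f(P^{\ast},s^{\ast}) \leq \mathcal{T}$. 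Verifying that the resulting $\mathcal{P}'$ is a genuine element of $\Lambda[S']$ — connectivity of each block (removing the pendant subtree $V_{-v}(u)$ preserves connectivity) and exactly one $S'$-sink per block — is then routine bookkeeping that establishes feasibility of $S'$.

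Finally, for the ``in particular'' clause I would show $|S \cap V_{-v}(u)| \geq 1$. If $S$ placed no sink in $V_{-v}(u)$, then every block meeting $V_{-v}(u)$ would have to reach an outside sink through $v$, forcing a single crossing block $P^{\ast} \supseteq V_{-v}(u) \cup \{v\}$ whose sink $s^{\ast}$ lies outside; the gateway inequality with $U = V_{-v}(u) \cup \{v\}$, $w = v$, $z = s^{\ast}$ would then give $f(V_{-v}(u) \cup \{v\}, v) \leq f(P^{\ast}, s^{\ast}) \leq \mathcal{T}$, contradicting the PC. Hence $|S'| = |S| - |S \cap V_{-v}(u)| + 1 \leq |S|$, so when $S$ is an optimal (minimum-cardinality) feasible placement, the feasible $S'$ satisfies $|S'| \leq |S| = k^{\ast}$ and is therefore optimal as well.
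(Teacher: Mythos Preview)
Your proof is correct and follows the same approach as the paper: iterated path monotonicity (your ``gateway'' inequality) shows no sink outside $V_{-v}(u)$ can serve it, forcing $|S\cap V_{-v}(u)|\ge 1$, while condition~(iii) together with set monotonicity yields feasibility of $S'$. The paper's own proof is a two-line sketch that invokes exactly these ingredients but does not spell out the orphaned-remnant case you flag as delicate; your explicit treatment of that case via max composition at $u$ and the gateway bound $f(Q\cup\{u\},u)\le f(P^\ast,s^\ast)$ is a genuine improvement in rigor over what the paper writes down.
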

\begin{proof}[Sketch Proof]
 By path monotonicity no sink outside $V_{-v}(u)$ can support $V_{-v}(u)$, so the only choice is to put a sink in $V_{-v}(u)$. By $(iii)$, the best place to put the sink is then $u$.
\end{proof}

\begin{proof}[Full Proof Lemma \ref{lemma:PeakingCriterionPutSink}]
	By path monotonicity, $(iv)$ implies that none of the nodes in $V \backslash V_{-v}(u)$ support $V_{-v}(u)$. On the other hand, by $(iii)$ and set monotonicity, $(S \backslash V_{-v}(u)) \cup \{ u \} )$ is a feasible sink placement.
	
	Moreover, $F(S) \leq \mathcal{T}$ implies $|S \cap V'| \geq 1$, so $|(S \backslash V') \cup \{ u \}| \leq |S|$. In other words, we know that any feasible configuration needs to place a sink in $T_{-v}(u)$ configuration, and no sink needs to exist as a descendent of of $u$ in $T$. This in turn implies that all sinks in $T_{-v}(u)$ can be replaced with one single sink.
\end{proof}

If $(u,v)$ satisfies the peaking criterion, we can immediately place  a sink at $u$  and then  commit $V_{-v}(u)$. 
 The following demonstrates that, whenever $S = \emptyset$, at least $1$ sink can be found using the peaking criterion, unless a single node can $s \in V$ support the entire graph.

\begin{lemma}
 Suppose for some $v,u$, $f(V_{-v}(u) \cup \{v\}, v) > \mathcal{T}$, and $S \cap V_{-v}(u) = \emptyset$. Then there exists a pair of nodes $s,t \in V_{-v}(u) \cup \{v\}$ such that $(s,t)$ satisfies the peaking criterion.
 \label{lemma:PCRecurseInto}
\end{lemma}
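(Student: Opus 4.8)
The plan is to root the subtree $W := V_{-v}(u)\cup\{v\}$ at $v$ and perform a greedy descent, stopping at the first edge along which the serving cost ``peaks.'' The first step is a structural observation: once $W$ is rooted at $v$, for every non-root node $w$ with parent $p$, the component of the working tree with $p$ deleted that contains $w$ is precisely $V_{-p}(w)$, the set of descendants of $w$ together with $w$, and this set is contained in $V_{-v}(u)$. Two consequences follow immediately. First, since $S \cap V_{-v}(u) = \emptyset$, none of these descendant-subtrees contains a sink, so the ``$T_{-t}(s)$ has no sink'' clause of the peaking criterion is automatic for every candidate pair I examine. Second, every leaf $\ell$ of $W$ satisfies $f(\{\ell\},\ell) = 0 \leq \mathcal{T}$ by the first axiom on $f$.

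Next I would set up a descent invariant: maintain a parent candidate $t$ together with one of its children $c$ such that $f(V_{-t}(c) \cup \{t\}, t) > \mathcal{T}$. This holds initially with $t = v$ and $c = u$, since $v$'s only neighbor in $W$ is $u$ and $f(V_{-v}(u)\cup\{v\}, v) > \mathcal{T}$ by hypothesis. At each step I test whether $f(V_{-t}(c), c) \leq \mathcal{T}$. If it holds, then $(c,t)$ satisfies the peaking criterion: $c,t$ are neighbors, $V_{-t}(c)$ is sink-free by the observation above, $f(V_{-t}(c), c) \leq \mathcal{T}$, and $f(V_{-t}(c) \cup \{t\}, t) > \mathcal{T}$ by the invariant. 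So I stop and return $(c,t)$, both of which lie in $W = V_{-v}(u)\cup\{v\}$ as required.

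Otherwise $f(V_{-t}(c), c) > \mathcal{T}$. Because every leaf has cost $0$, this forces $c$ to be internal, and max composition applied to $c$ inside the subtree $V_{-t}(c)$ gives $f(V_{-t}(c), c) = \max_{c'} f(V_{-c}(c') \cup \{c\}, c)$, the maximum taken over the children $c'$ of $c$ (the deletion of $c$ from $V_{-t}(c)$ produces exactly the children's descendant-subtrees, since $t$ is excluded). Hence some child $c'$ attains a value exceeding $\mathcal{T}$, and I move one level deeper by replacing $(t,c)$ with $(c,c')$, which re-establishes the invariant (and keeps $V_{-c}(c') \subseteq V_{-v}(u)$, preserving sink-freeness). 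Since each step strictly increases depth in a finite tree, the descent terminates, and by the case analysis it can only terminate in the ``stop'' branch, yielding a peaking pair.

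The step I expect to require the most care is the structural identification underlying the whole descent: I must verify that each set $V_{-t}(c)$ encountered genuinely coincides with a rooted descendant-subtree contained in $V_{-v}(u)$, so that both sink-freeness and the max-composition decomposition legitimately apply to it, and that the child $c'$ produced by max composition points strictly \emph{away} from $v$ rather than back up the tree. Once this rooted-subtree picture is pinned down, the only quantitative ingredient is max composition, which is precisely what converts ``$c$ cannot serve its own subtree'' into ``some child's subtree is already too expensive for $c$,'' driving the recursion one level deeper and guaranteeing that a peak is eventually reached.
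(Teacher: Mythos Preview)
Your proposal is correct and follows essentially the same approach as the paper: both arguments descend from the initial pair $(v,u)$ using max composition, observing that either the current pair already satisfies the peaking criterion or one can step one level deeper into $V_{-v}(u)$, with termination forced by finiteness of the tree. The paper phrases this as a proof by contradiction (an endless descending sequence $\eta_0,\eta_1,\dots$), while you phrase it as a direct terminating descent, but the underlying mechanism is identical.
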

\begin{proof}[Proof of Lemma \ref{lemma:PCRecurseInto}]
	Assume for a contradiction that no pair $(s,t) \in V_{-v}(u) \times V_{-v}(u)$ satisfies the peaking criterion.
	
	This implies $f(V_{-v}(u), u) > \mathcal{T}$, which in turn implies $|V_{-v}(u)| \geq 2$, because $f(\{u\},u) = 0\leq \mathcal{T}$. Then max composition imply there exists $\eta_0 \in V_{-v}(u), \eta_0 \neq u$ that is a neighbor of $u$ such that $f(V_{-u}(\eta_0) \cup \{u\}, u) > \mathcal{T}$.
	
	Applying this repeatedly will generate an endless sequence of distinct nodes $\eta_0,\eta_1,\eta_2,...$ such that $\eta_i$ is a neighbor of $\eta_{i+1}$ but $f(V_{-\eta_i}(\eta_{i+1}) \cup \{\eta_i\}, \eta_i) > \mathcal{T}$, which is impossible because $T$ is finite. 
	%
\end{proof}

\begin{corollary}
 Given $S = \emptyset$, either one of the following occurs:
 \begin{enumerate}
  \item For any $s \in V$ we have $f(V,s) \leq \mathcal{T}$, or
  \item There exist a pair of nodes $u,v \in V$ that satisfies the peaking criterion.
 \end{enumerate}
 \label{corollary:PCStoppingCondition}
\end{corollary}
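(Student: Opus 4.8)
The plan is to prove the corollary by contraposition on condition (1): I assume condition (1) fails and derive condition (2). Since the two alternatives listed are, up to the contrapositive, exhaustive with respect to the existence of a single ``bad'' sink, this is enough. So suppose (1) fails, i.e. there is some node $s_0 \in V$ with $f(V,s_0) > \mathcal{T}$. The goal is then to manufacture exactly the hypotheses of Lemma \ref{lemma:PCRecurseInto} and invoke it.

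First I would observe that $s_0$ cannot constitute the whole tree: by the base case in the definition of $f$ we have $f(\{s_0\},s_0) = 0 \le \mathcal{T}$, so the strict inequality $f(V,s_0) > \mathcal{T}$ forces $V \neq \{s_0\}$. Hence $s_0$ has at least one neighbor and removing it produces a nonempty forest $\mathcal{F}_{-s_0}$. Next I would apply max composition to $f(V,s_0)$: writing the trees of $\mathcal{F}_{-s_0}$ as $T_{-s_0}(u_1),\dots,T_{-s_0}(u_l)$, where each $u_i$ is the neighbor of $s_0$ lying in the $i$-th tree, max composition gives $f(V,s_0) = \max_{1\le i\le l} f(V_{-s_0}(u_i)\cup\{s_0\}, s_0)$. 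Since the left-hand side exceeds $\mathcal{T}$, at least one term on the right does as well; fix such an index $i$ and set $u := u_i$ and $v := s_0$.

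At this point I have neighbors $u,v$ with $f(V_{-v}(u)\cup\{v\},v) > \mathcal{T}$, and because we are in the regime $S = \emptyset$ the condition $S \cap V_{-v}(u) = \emptyset$ holds trivially. These are precisely the hypotheses of Lemma \ref{lemma:PCRecurseInto}, which then yields a pair $(s,t) \in V_{-v}(u)\cup\{v\}$ satisfying the peaking criterion. This establishes (2) and completes the argument.

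I do not expect a genuine obstacle, since the statement is essentially a repackaging of Lemma \ref{lemma:PCRecurseInto}; the feasibility guarantee is supplied entirely by that lemma. The only point requiring care is the degenerate case analysis: one must rule out $V = \{s_0\}$ before applying max composition, because the max-composition identity reduces to an empty maximum for a single-vertex tree, and this is handled cleanly by the base-case value $f(\{s_0\},s_0)=0$. A secondary sanity check is that $s_0$ need not be an internal node — if $s_0$ is a leaf the forest $\mathcal{F}_{-s_0}$ consists of a single tree ($l=1$) and the argument goes through verbatim.
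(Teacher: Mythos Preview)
Your proof is correct and follows essentially the same route as the paper: assume (i) fails, use max composition at the witnessing vertex to extract a branch with cost exceeding $\mathcal{T}$, then invoke Lemma~\ref{lemma:PCRecurseInto}. The paper additionally (and redundantly) argues the contrapositive direction $\neg(\mathrm{ii})\Rightarrow(\mathrm{i})$, and your handling of the negation of (i)---taking a single witness $s_0$ rather than asserting $f(V,v)>\mathcal{T}$ for all $v$---is in fact more careful than the paper's own phrasing.
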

\begin{proof}[Proof of Corollary \ref{corollary:PCStoppingCondition}]
	Suppose (i) does not hold. Then for all $v \in V$ we have that $f(V,v) > \mathcal{T}$; by max composition of $f$, this means for every $v \in V$ there exists a neighbor $u$ of $v$ such that $f(V_{-v}(u) \cup \{v\}, v) > \mathcal{T}$. Then Lemma \ref{lemma:PCRecurseInto} implies (ii).
	
	Now suppose instead (ii) does not hold. Take $v \in V$. By Lemma \ref{lemma:PCRecurseInto}, for every $u \in V$ that is a neighbor of $v$, we know that $f(V_{-v}(u) \cup \{v\}, v) \leq \mathcal{T}$; this in turn implies, by max composition, $f(V,v) \leq \mathcal{T}$. As $v$ was taken arbitrarily, we have (i). \
\end{proof}

%
%

At stages where it is applicable, for each ordered pair $(u,v)$ that satisfies the peaking criterion we place a sink at $u$ and remove nodes in $V_{-v}(u)$. If instead the first case of the above corollary occurs, we can add an arbitrary $s \in V$ to $S$ and $S_{\mathrm{out}}$ and terminate.

\subsection{Hub tree}

Corollary \ref{corollary:PCStoppingCondition}  provides  two ways to add sinks to $S_\mathrm{out}$.
We do not add sinks  any other way. But merely applying this principle does not find all sinks. We therefore  in Section \ref{subsec: Subroutine: Dual Peaking Criterion} introduce a new process that complements the peaking criterion. First,  we introduce the hub tree, which has convenient properties that arise from applying the peaking criterion.

\begin{definition}[Hubs]
 Let $L \subseteq V$ be the leaves of the rooted tree $T=(V,E)$. 
 Let  $S \subseteq L$, be a set of sinks, with no sink in $V \backslash L$. Let $H(S) \subseteq V$ be the set of lowest common ancestors of all pairs of sinks in $T$. The nodes in $H(S)$ are the {\em hubs} associated with $S$. 
 
 The {\em hub tree}  $T_{H(S)} = (V_{H(S)}, E_{H(S)})$ is the subgraph of $T$ that includes all vertices and edges along all possible simple paths among nodes $H(S) \cup S$. (See illustration in appendix)
\end{definition}

\begin{definition}[Outstanding branches]
Given $T = (V,E)$ and $S$, we say that a node $w \in V$ branches out to $\eta$ if $\eta$ is a neighbor of $w$ in $T$ that does not exist in $V_{H(S)}$. The subtree $T' := T_{-w}(\eta)$ is called an {\em outstanding branch}; we say that $T'$ is {\em attached} to $w$.
\end{definition}

\begin{definition}[Bulk path]
 Given two distinct $u,v \in V_{H(S)}$, the {\em bulk path} $\mathrm{BP}(u,v)$ is the union of nodes along the unique path $\Pi$ between $u,v$ (inclusive), along with all the nodes in all outstanding branches that are attached to any node in $\Pi$.
\end{definition}

A crucial property arises after an exhaustive application of the peaking criterion.

\begin{definition}[RC-viable]
 Given $T$ and sinks $S$, we say that $T$ is {\em RC-viable} if:
 \begin{enumerate}
  \item all sinks $S$ occur at the leaves of $T$
  \item if $T' = (V',E')$ is an outstanding branch attached to $w \in V_{H(S)}$, then $f(V' \cup \{ w \},w) \leq \mathcal{T}$
 \end{enumerate}
\end{definition}

\begin{lemma}
 Given $T = (V,E)$ and sinks $S$, where $S$ is a subset of leaves of $T$. Suppose no ordered pair $(u,v) \in V \times V$ satisfy the peaking criterion. Then $T$ is RC-viable.
 \label{lemma:RCViability}
\end{lemma}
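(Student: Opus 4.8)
The first condition of RC-viability, that all sinks lie at leaves, is exactly one of the hypotheses, so nothing needs to be done there. The plan is to focus entirely on the second condition: for every outstanding branch $T' = (V',E')$ attached to a hub-tree node $w \in V_{H(S)}$, show $f(V' \cup \{w\}, w) \leq \mathcal{T}$. I would write $T' = T_{-w}(\eta)$, where $\eta$ is the neighbor of $w$ (with $\eta \notin V_{H(S)}$) toward which $w$ branches out, so that $V' = V_{-w}(\eta)$.

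First I would establish the key structural fact that the outstanding branch contains no sink, i.e. $S \cap V_{-w}(\eta) = \emptyset$. The argument is that the hub tree is the minimal subtree of $T$ spanning $H(S) \cup S$, and hence is convex: the unique $T$-path between any two of its vertices stays inside it. Every sink $s \in S$ lies in $H(S) \cup S$ and therefore in $V_{H(S)}$, and by assumption $w \in V_{H(S)}$; so if some sink $s$ were in $V_{-w}(\eta)$, the $T$-path from $w$ to $s$ — which necessarily passes through $\eta$ — would lie inside $V_{H(S)}$, forcing $\eta \in V_{H(S)}$ and contradicting the definition of an outstanding branch.

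With the branch known to be sink-free, the second condition follows by a direct appeal to Lemma \ref{lemma:PCRecurseInto}. Suppose, for contradiction, that $f(V_{-w}(\eta) \cup \{w\}, w) > \mathcal{T}$. Then the pair $(\eta, w)$ meets exactly the hypotheses of Lemma \ref{lemma:PCRecurseInto} (in its notation, $u = \eta$ and $v = w$): we have $f(V_{-w}(\eta) \cup \{w\}, w) > \mathcal{T}$ together with $S \cap V_{-w}(\eta) = \emptyset$. That lemma then produces a pair of nodes satisfying the peaking criterion, contradicting the standing assumption that no ordered pair does. Hence $f(V' \cup \{w\}, w) \leq \mathcal{T}$, which is precisely the second condition of RC-viability.

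The only genuinely delicate step is the sink-freeness of outstanding branches, which rests on the convexity of the hub tree; once that is in hand, Lemma \ref{lemma:PCRecurseInto} does all the real work and the rest is mechanical. I would also note the degenerate cases ($|S| \leq 1$, where $V_{H(S)}$ is empty or a single vertex): there are then no outstanding branches of the required form, so condition (2) holds vacuously and the conclusion is immediate.
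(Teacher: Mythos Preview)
Your proposal is correct and follows essentially the same route as the paper: observe that an outstanding branch contains no sink, then invoke Lemma~\ref{lemma:PCRecurseInto} (in contrapositive form) to conclude $f(V' \cup \{w\}, w) \leq \mathcal{T}$. You supply more justification for sink-freeness (via convexity of the hub tree) than the paper, which simply asserts it from the definition of outstanding branch; this is elaboration rather than a different idea.

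One small slip in your parenthetical on degenerate cases: when $|S| = 1$, the hub tree is the single sink vertex $s$, but there \emph{is} an outstanding branch attached to $s$ --- namely all of $T \setminus \{s\}$ --- so condition~(2) is not vacuous there. Fortunately your main argument already handles this case uniformly (Lemma~\ref{lemma:PCRecurseInto} applies with $v = s$ and $u$ its unique neighbor), so the error is confined to the aside and does not affect the proof.
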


\begin{proof}[Proof of 	Lemma \ref{lemma:RCViability}]
	
	Let $T' = (V',E')$ be an arbitrary outstanding branch attached to some node $w \in V_{H(S)}$. It suffices to show that $f(V' \cup \{ w \},w) \leq \mathcal{T}$.
	
	As $T'$ is an outstanding branch, we know $V' \cap S = \emptyset$. By assumption no pair $u,v $ in $V' \cup \{ w\}$ would satisfy the peaking criterion, so from Lemma \ref{lemma:PCRecurseInto} we know that $f(V' \cup \{ w \},w) \leq \mathcal{T}$. \
\end{proof}

Now when $T$ is RC-viable w.r.t. $S$, there is no need to place sinks within outstanding branches; this is because if an outstanding branch is attached to a node $w$, then a sink at $w$ can already serve the the entire outstanding branch. 

\subsection{Subroutine: Reaching Criterion}
\label{subsec: Subroutine: Dual Peaking Criterion}

The peaking criterion is a way to add sinks to $T$ and remove certain nodes from $T$. On the other hand, the reaching criterion is a way to \emph{remove} sinks from $T$ and $S$, while keeping them in $S_{\mathrm{out}}$. Roughly speaking, the reaching criterion finalizes all nodes that should be assigned to certain sinks, and then removes all these nodes from consideration.

Given $T$ and $S$, we say a node $v \in E_{H(S)}$ \emph{can evacuate} to $s \in S$ if $f(\mathrm{BP}(v,s),s) \leq \mathcal{T}$; when such $s \in S$ exists for $v$, we say that $v$ \emph{can evacuate}. Given this we can formulate an `opposite' to the peaking criterion, which allows us to remove nodes, including sinks, from $T$.


\begin{definition}[Reaching criterion]
 Given $T = (V,E)$ and a set of sinks $S$, placed at the leaves of $T$. Let  $T$ be RC-viable with respect to $S$ and  $(u,v) \in V \times V$ be an ordered pair of nodes.
 $u,v$ satisfy the {\em reaching criterion (RC)} if and only if they are neighbors in $T$, and $T_{-v}(u)$ is self-sufficient while the tree induced by $\mathrm{BP}(v,u) \cup V_{-v}(u)$ is not. 
\end{definition}

\begin{theorem}
 Suppose $T = (V,E)$ is RC-viable with respect to $S \subseteq V$. If $u,v \in V$ satisfies the reaching criterion, then we can remove $T_{-v}(u)$ from $T$, and also commit all blocks in the partitioning of $T_{-v}(u)$ that witnesses the self-sufficiency of $T_{-v}(u)$. By definition, $T_{-v}(u)$ includes at least one sink from $S$.
 \label{theorem:RCTrimTree}
\end{theorem}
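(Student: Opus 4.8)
The plan is to establish Theorem~\ref{theorem:RCTrimTree} as a separation statement in the same spirit as Lemma~\ref{lemma:Separation}: committing the self-sufficient partition of $T_{-v}(u)$, retiring its internal sinks into $S_{\mathrm{out}}$, and deleting $V_{-v}(u)$ from the working tree preserves the invariants \ref{invariant:test} and (C2), and makes the bounded problem on $T$ feasible within budget if and only if the problem on the reduced tree $T \backslash T_{-v}(u)$ (which retains $v$) is. Write $S_{\mathrm{in}} := S \cap V_{-v}(u)$ for the internal sinks and let $\mathcal{P}_0$ be the partition of $V_{-v}(u)$ witnessing self-sufficiency.

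First I would dispose of the easy facts. Since $T_{-v}(u)$ is self-sufficient, $V_{-v}(u)$ can be served by $S_{\mathrm{in}}$; as serving a nonempty set requires some block cost $f(P,s) \leq \mathcal{T} < +\infty$, and the atomic cost function forces $s \in P$, the set $S_{\mathrm{in}}$ is nonempty, which is the asserted sink. After normalising $\mathcal{P}_0$ so that each block carries exactly one sink, committing these blocks maintains (C2) because each is served at cost $\leq \mathcal{T}$ by its unique internal sink. The reverse (gluing) direction is then routine and mirrors the (ii)$\Rightarrow$(i) step of Lemma~\ref{lemma:Separation}: given any feasible configuration $(S_2^*,\mathcal{P}_2^*)$ on the reduced tree, the pair $S_2^* \cup S_{\mathrm{in}}$ and $\mathcal{P}_2^* \cup \mathcal{P}_0$ partitions all of $V$ with every block served at cost $\leq \mathcal{T}$, so by max composition and~(\ref{eq:MaxCompositionB}) the overall cost stays $\leq \mathcal{T}$ while the sink counts add within budget.

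The substance is the forward direction: from an optimal feasible $S_1^*$ with $S_{\mathrm{out}} \subseteq S_1^*$ and witness $\mathcal{P}_1^*$, I must produce a feasible solution on the reduced tree without exceeding $k$, i.e.\ exhibit an optimal solution that respects a clean cut across the edge $(u,v)$. I would argue in two steps. First, using self-sufficiency I reabsorb all of $V_{-v}(u)$ into $\mathcal{P}_0$; any sink that $S_1^*$ places strictly inside $V_{-v}(u)$ beyond $S_{\mathrm{in}}$ then becomes redundant and can be deleted, which by minimality of $|S_1^*|$ shows no such extra sink exists. Second, I claim no block of $\mathcal{P}_1^*$ assigned to an internal sink can straddle the edge $(u,v)$. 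This is exactly where the second half of the reaching criterion is used: the non-self-sufficiency of $\mathrm{BP}(v,u) \cup V_{-v}(u)$ states that $S_{\mathrm{in}}$ cannot serve $V_{-v}(u)$ \emph{together with} $v$ and the outstanding branches at $v$. A straddling block served by an internal sink $s$ necessarily contains $v$ (its block is connected and reaches a node beyond $v$), and intersecting it with the connected region $\mathrm{BP}(v,u) \cup V_{-v}(u)$ and applying set and path monotonicity would let $S_{\mathrm{in}}$ cover all of $\mathrm{BP}(v,u) \cup V_{-v}(u)$ at cost $\leq \mathcal{T}$, contradicting non-self-sufficiency. Hence $v$ is served from the far side, the cut is clean, and restricting $S_1^*$ to $V \backslash V_{-v}(u)$ gives a feasible configuration on the reduced tree using $|S_1^*| - |S_{\mathrm{in}}|$ sinks.

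I expect this second step to be the principal obstacle, since it is the only place the non-self-sufficiency hypothesis is invoked and it requires turning a hypothetical boundary-crossing block into the desired contradiction while correctly handling the outstanding branches attached to $v$ (which also belong to $\mathrm{BP}(v,u)$) and simultaneously certifying that the orphaned node $v$ remains servable from the far side inside the reduced tree. A secondary technical point is the one-sink-per-block normalisation of $\mathcal{P}_0$, which must be shown compatible with block connectivity. Once the clean cut is secured, invariant~\ref{invariant:test} is preserved because $S_{\mathrm{out}}$ acquires no new elements and the restricted $S_1^*$ still witnesses an optimal placement on the reduced tree, which together with the two feasibility directions completes the equivalence and hence the theorem.
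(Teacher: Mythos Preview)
Your plan is essentially the paper's argument: normalise the optimal solution so that it places no extra sinks inside $V_{-v}(u)$, then argue by contradiction that no block of the witnessing partition can straddle the edge $(u,v)$, so the cut is clean and the separation goes through as in Lemma~\ref{lemma:Separation}.

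The obstacle you correctly flagged---the outstanding branches attached to $v$---is exactly the one point where your argument in step~4b is incomplete, and it is the one place the paper uses RC-viability. Your contradiction reads: intersect a straddling block $P$ (with internal sink $s$) with $\mathrm{BP}(v,u)\cup V_{-v}(u)$ and conclude $S_{\mathrm{in}}$ serves that whole region. But $P$ need not contain all of $\mathrm{BP}(v,u)$: some outstanding branch $B$ at $v$ may lie in a different block $P_B$ whose sink sits inside $B$ itself, and then $B$ is \emph{not} served by $S_{\mathrm{in}}$, so no contradiction arises. The paper closes this gap by first normalising the optimal configuration so that $S^*$ contains no node in any outstanding branch attached to $v$; this is licensed precisely by RC-viability condition~(ii), which guarantees $f(B\cup\{v\},v)\le\mathcal{T}$ for each such branch $B$, so any sink placed in $B$ is redundant. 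Once that normalisation is in place, every block touching an outstanding branch at $v$ must, by connectivity, contain $v$ and hence coincide with $P$; now $P$ really does swallow all of $\mathrm{BP}(v,u)\setminus V_{-v}(u)$, the remaining blocks inside $V_{-v}(u)$ have sinks in $S_{\mathrm{in}}$, and the non-self-sufficiency hypothesis delivers the contradiction. Add this one normalisation step and your outline becomes the paper's proof.
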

\begin{proof}[Proof of Theorem \ref{theorem:RCTrimTree}]
	As $T_{-v}(u)$ is self sufficient, no additional sink has to be placed in it. By RC-viability, no sink has to be placed in outstanding branches, either.
	
	Formally, this means there exists an optimal feasible sink configuration $(S^*,\mathcal{P}^*)$ where $S^*$ contains no node in any outstanding branch attached to $v$; furthermore, $S^* \cap V_{-v}(u) = S \cap V_{-v}(u)$. 
	
	Now suppose for a contradiction that there exists $P \in \mathcal{P}^*$ such that $u,v \in P$. Because a block in $\mathcal{P}^*$ has to induce a connected component, and there is no sink in the outstanding branches attached to $v$, we know $\mathrm{BP}(v,u)$ has to be served by sinks in $V_{-v}(u)$, which violates the assumption that the tree induced by $\mathrm{BP}(v,u) \cup V_{-v}(u)$ is not self-sufficient.
	
	So $v$ and $u$ can not be assigned to the same block in $\mathcal{P}^*$. This in turn implies that none of the blocks in $\mathcal{P}^*$ can span nodes in both $V \backslash V_{-v}(u)$ and $V_{-v}(u)$, because each block has to induce a connected component in $T_\mathrm{in}$. Thus $T_{-v}(u)$ is no longer relevant and can be safely removed. \
\end{proof}

 
After removing $T_{-v}(u)$ by the reaching criterion, 
we need to run the peaking criterion again on $T$, in order to preserve RC-viability.

%
\subsubsection{Testing for self-sufficiency.}
In order to make use of the reaching criterion, we require efficient tests for self-sufficiency. Note that \cite{Mamada2005} readily gives such test albeit at a higher time complexity. In our algorithm, we perform self-sufficiency tests on a rooted subtree $T'$ only if it satisfies some special conditions, allowing us to exploit RC-viability and reuse past computations. By our arrangements, when such $T'$ passes our test we know it demonstrates a stronger form of self-sufficiency.

\begin{definition}[Recursive self-sufficiency]
 Given a rooted subtree $T' = (V',E')$ of $T$, $V' \cap S \neq \emptyset$, we say that $T'$ is {\em recursively self-sufficient} if for all $u \in V_{H(S)} \cap V'$, the subtree of $T'$ rooted at $u$ is self-sufficient.
\end{definition}

A bottom-up approach can be used to test for recursive self-sufficiency, which in turn implies `plain' self-sufficiency. 

\begin{lemma}
 Given a RC-viable rooted subtree $T' = (V',E')$ of $T$, $V' \cap S \neq \emptyset$, where $v$ is the root. 
 Suppose there exists a child $u$ of $v$ in $V_{H(S)} \cap V'$ such that $T_{-v}(u)$ is recursively self-sufficient, and there is a sink $s \in S \cap V_{-v}(u)$ such that $u$ can evacuate to $s$.
 
 Then $BP(v,s) \cup V_{-v}(u)$ is recursively self-sufficient. If, additionally, for every child $u'$ of $v$ in $V_{H(S)} \cap V'$, $T_{-v}(u')$ is recursively self-sufficient, then $T'$ is recursively self-sufficient.
 \label{lemma:RecursiveSS}
\end{lemma}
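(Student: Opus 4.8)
The plan is to verify recursive self-sufficiency of the combined tree \emph{hub node by hub node}: by definition I must show that, for every $w \in V_{H(S)}$ lying in $\mathrm{BP}(v,s) \cup V_{-v}(u)$, the subtree rooted at $w$ is self-sufficient. I first note that an outstanding branch contains no hub node at all: if some hub lay in a branch $T_{-w}(\eta)$, the path from it to $w$ (both hub-tree nodes) would pass through $\eta$, forcing $\eta \in V_{H(S)}$ and contradicting that $\eta$ is a branch root. Hence the hub nodes of $\mathrm{BP}(v,s) \cup V_{-v}(u)$ are exactly $v$ together with the hub nodes of $V_{-v}(u)$. For any hub node $w$ of $V_{-v}(u)$ (including $u$ itself) the subtree rooted at $w$ inside the combined tree coincides with the subtree of $V_{-v}(u)$ rooted at $w$, since the only extra material, $v$ and the branches attached to $v$, lies outside the subtree rooted at $w$; its self-sufficiency is therefore already given by the recursive self-sufficiency of $T_{-v}(u)$. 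So everything reduces to the single new requirement: the subtree rooted at $v$, i.e.\ all of $\mathrm{BP}(v,s) \cup V_{-v}(u)$, must be self-sufficient.

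For that I would exhibit an explicit partition. Let $\Pi = (u = w_0, w_1, \dots, w_m = s)$ be the $u$--$s$ path. Assign the whole bulk path $\mathrm{BP}(v,s)$ to $s$; this one block absorbs $v$, the branches attached to $v$, the path $\Pi$, and the branches attached to the $w_i$. What remains, $V_{-v}(u) \setminus \mathrm{BP}(u,s)$, is a forest whose components are the hub subtrees that branch off the $w_i$ through hub edges; each such component is a subtree rooted at a hub node, so recursive self-sufficiency of $T_{-v}(u)$ supplies a self-sufficient partition of it served by sinks inside it. Together with the block of $s$ this yields a partition of $\mathrm{BP}(v,s) \cup V_{-v}(u)$ into connected blocks, each served by a sink it contains, which is exactly self-sufficiency. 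The second statement then follows by the same bookkeeping: adjoin to the block of $v$ the recursively self-sufficient partitions of every remaining child subtree $T_{-v}(u')$ (each disjoint from $\mathrm{BP}(v,s)$), and re-check that every hub node of $T'$ other than $v$ inherits self-sufficiency from the child subtree containing it.

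The crux, and the step I expect to be the main obstacle, is the cost bound $f(\mathrm{BP}(v,s),s) \le \mathcal{T}$ needed to justify that $s$ may serve the entire block. The evacuation hypothesis only gives $f(\mathrm{BP}(u,s),s) \le \mathcal{T}$, and since $\mathrm{BP}(v,s) = \mathrm{BP}(u,s) \cup \{v\} \cup (\text{branches attached to } v)$ with $\mathrm{BP}(u,s) \subseteq \mathrm{BP}(v,s)$, set monotonicity points the \emph{wrong} way and cannot bound the larger set. The extra mass is precisely $v$ and its outstanding branches, which RC-viability controls only locally, through $f(V' \cup \{v\},v) \le \mathcal{T}$ for each branch $V'$ attached to $v$. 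Turning these local guarantees together with the evacuation bound at $u$ into a global bound at the distant sink $s$ is the delicate point: I would attack it with max composition, splitting the cost at $v$ into the contribution of the sub-bulk-path toward $s$ (controlled by $u$ evacuating to $s$) and the contributions of the branches attached to $v$ (controlled by RC-viability), while checking that routing a branch's flow through $v$ toward $s$ does not inflate the cost past $\mathcal{T}$. This last interaction is exactly where the finer structure of $f$, beyond mere monotonicity, must be invoked.
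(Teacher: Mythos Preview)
Your overall architecture matches the paper's exactly: assign the whole of $\mathrm{BP}(v,s)$ to $s$ as one block, and cover what remains of $V_{-v}(u)$ by the partitions supplied by recursive self-sufficiency of $T_{-v}(u)$. The paper's proof is precisely this, stated in two sentences.

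You are also right that the only nontrivial step is the bound $f(\mathrm{BP}(v,s),s)\le\mathcal T$, and that the stated hypothesis ``$u$ can evacuate to $s$'' only gives $f(\mathrm{BP}(u,s),s)\le\mathcal T$. Your plan to bridge the gap via max composition will not work: max composition only decomposes $f(U,\cdot)$ at the \emph{sink} vertex, so it lets you split $f(\cdot,v)$ at $v$, not $f(\cdot,s)$ at the distant node $v$. In the abstract axiom system (set monotonicity, path monotonicity, max composition) there is simply no rule that lets you push the RC-viability bounds $f(B\cup\{v\},v)\le\mathcal T$ for branches $B$ at $v$ all the way down to the sink $s$; set monotonicity, as you note, goes the wrong way, and nothing else relates costs at $s$ to costs at $v$.

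The resolution is that this is a slip in the lemma statement, not a missing argument. The paper's own proof does not derive $f(\mathrm{BP}(v,s),s)\le\mathcal T$ from the $u$-hypothesis either; it simply ``assigns $v$ to $s$'' and moves on. If you look at how the lemma is actually used---Algorithm~\texttt{RC.Climb} tests $f(\mathrm{BP}(v,s),s)\le\mathcal T$ for the \emph{parent} $v$, and the invariant proof invokes the lemma after checking exactly this inequality---the operative hypothesis is ``$v$ can evacuate to $s$''. With that reading your ``crux'' is given, and your proof is complete and identical to the paper's.
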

\begin{proof}[Proof of Lemma \ref{lemma:RecursiveSS}]
	Suppose we assign $v$ to sink $s$. This implies that all nodes in $BP(v,s)$ are assigned to $s$. On the other hand, consider the graph induced by $V_{-v}(u) \backslash BP(v,s)$. For any node $v' \in (V_{H(S)} \cap V_{-v}(u)) \backslash BP(v,s)$, the subtree of $T'$ rooted at $v'$ is self sufficient, because $T_{-v}(u)$ is recursively self-sufficient.
	
	Additionally, if every child $u'$ of $v$ in $V_{H(S)} \cap V'$ is such that $T_{-v}(u')$ is recursively self-sufficient, by a similar argument we know that $T'$ is recursively self-sufficient. \
\end{proof}

We say that $s$ is a witness to Lemma \ref{lemma:RecursiveSS} for $T'$ and $v$; we store this witness, as well as the witness for every subtree of $T'$ rooted at some $v \in V'$. From the proof of Lemma \ref{lemma:RecursiveSS} one can see it is easy to retrieve a partition $\mathcal{P}'$ of $T'$ that witnesses the self-sufficiency of $T'$, in $O(|V'|)$ time. See Algorithm \ref{alg:FindPartitionRecursiveSS} in appendix.


%
%
%

For this to be useful, note that only recursive self-sufficiency will be relevant. 
When a RC-viable tree is self-sufficient but not recursively self-sufficient, if we process bottom-up, we can always cut off part of the tree using the reaching criterion, so that the remainder is recursively self-sufficient. This is demonstrated in the detailed algorithm.
\subsection{Combining the Pieces}

The main ingredients of our algorithm are the peaking and reaching criteria along with ideas to test self-sufficiency. We use the peaking criterion to add sinks to $T$, and then the reaching criterion to remove sinks and nodes from $T$, until either $T$ is empty or $T$ can be served by a single sink. In the following we describe a full algorithm that makes use of these ideas. 

\subsubsection{Simpler, iterative approach (`Tree Climbing')}

Essentially, in this algorithm we iteratively check and apply the two peaking criteria bottom-up from the leaves. We do not specify a root here; the root can be arbitrary, and changed whenever necessary.  As we go up from the leaves, for each pair $(u,v)$ that forms an edge of the tree, we would call the oracle $\mathcal{A}$ for $f(V_{-v}(u), u)$, $f(V_{-v}(u) \cup \{v\}, v)$ or $f(\mathrm{BP}(v,s),s)$ for some sink $s$, and apply either the peaking criterion or the reaching criterion. By design RC is checked whenever the tree is RC-viable, and PC is checked whenever the tree is not RC-viable, and we do not need to test both on the same pair $(u,v)$. 

\begin{lemma}
 The bounded-cost tree-climbing (Algorithm \ref{alg:BoundedCostFull}) makes $O(n)$ calls to $\mathcal{A}$.
 \label{lemma:BoundedCostCalls}
\end{lemma}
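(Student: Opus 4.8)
The plan is to prove Lemma \ref{lemma:BoundedCostCalls}, which asserts that the bounded-cost tree-climbing algorithm makes $O(n)$ oracle calls. The natural strategy is an amortization/charging argument: I want to show that the total number of oracle calls is bounded by a constant times the number of edges (equivalently vertices) of $T_{\mathrm{in}}$, i.e. $O(n)$. Since the algorithm proceeds bottom-up along the tree applying either the peaking criterion (PC) or the reaching criterion (RC) to edges $(u,v)$, the key observation to exploit is that each application of a criterion is associated with processing an edge, and processing an edge triggers only a bounded number of oracle calls.

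First I would make precise exactly which oracle calls are issued per edge. The text states that for each pair $(u,v)$ forming an edge we call $\mathcal{A}$ for at most a constant number of quantities — $f(V_{-v}(u),u)$, $f(V_{-v}(u)\cup\{v\},v)$, or $f(\mathrm{BP}(v,s),s)$ — and crucially that ``we do not need to test both [PC and RC] on the same pair $(u,v)$,'' since PC is checked when the tree is not RC-viable and RC when it is. So each directed edge generates $O(1)$ oracle calls \emph{per visit}. The heart of the argument is therefore to bound the total number of such visits. I would set up a potential/charging scheme in which each oracle call is charged either to a tree edge that is about to be permanently removed (via PC committing $V_{-v}(u)$, or via RC removing $T_{-v}(u)$), or to an edge being ``climbed over'' and marked (entering $\mathtt{Marked}_{\mathrm{PC}}$ or $\mathtt{Marked}_{\mathrm{RC}}$) for the last time.

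The main obstacle — and the step I expect to require the most care — is handling the interleaving of the two climbs. After RC removes a subtree $T_{-v}(u)$, the node $v$ becomes a new leaf, and the algorithm re-runs the peaking climb to restore RC-viability; conversely PC removes nodes and can enable further RC steps. I must argue that this alternation does not cause an edge to be re-examined more than $O(1)$ times, i.e. that the marking sets $\mathtt{Marked}_{\mathrm{PC}}$ and $\mathtt{Marked}_{\mathrm{RC}}$ are (essentially) monotone: once an edge is marked it is never unmarked, and whenever an edge is re-visited after a removal, that visit can be charged to the vertex or edge that was just deleted from $T$. Concretely, I would show that every oracle call falls into one of two categories: (a) calls that directly precede an irrevocable removal of at least one vertex from the working tree $T$, of which there are at most $O(n)$ total because a vertex is removed at most once; and (b) calls made while advancing a marked frontier upward, where each edge is advanced past only once because the ``at most one unmarked neighbor'' condition guarantees monotone progress toward the root.

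Finally I would assemble these bounds. Category (b) contributes $O(n)$ because each of the $n-1$ edges is climbed over a constant number of times across the whole execution (the monotonicity of marking prevents repeats except immediately after removals, which are separately counted in (a)), and category (a) contributes $O(n)$ because removals are vertex-disjoint and exhaust the $n$ vertices. Since every oracle call is charged to exactly one such event and each event receives only $O(1)$ calls, the grand total is $O(n)$ calls to $\mathcal{A}$, establishing the lemma. The delicate point throughout is the bookkeeping that ties each re-examination after a PC or RC removal to a freshly deleted vertex, so that no edge accrues unbounded charge; I would state and use the invariant that at the moment RC removes $T_{-v}(u)$ or PC commits $V_{-v}(u)$, all interior edges of the removed subtree were already marked and are never touched again.
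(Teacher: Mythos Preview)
Your proposal is correct and follows the same underlying idea as the paper, but you are doing far more work than the paper does. The paper's entire proof is a single sentence: ``We only make $O(1)$ calls to evaluate $f(\cdot,\cdot)$ for each pair $(u,v)\in E_{\mathrm{in}}$.'' That is, it simply asserts the per-edge bound and stops. Your charging scheme with categories (a) and (b), and your concern about the PC/RC interleaving, are all sound, but they constitute a justification of that one sentence rather than a different argument. The cleanest way to collapse your write-up to match the paper's level is to observe directly that an ordered pair $(u,v)$ can be enqueued to $\mathcal{Q}_{\mathrm{PC}}$ at most once (precisely when $u$ is removed from $\hat T_{\mathrm{PC}}$, which is irrevocable) and to $\mathcal{Q}_{\mathrm{RC}}$ at most once (when $u$ is removed from $\hat T_{\mathrm{RC}}$); each dequeue triggers $O(1)$ oracle calls, and there are $O(n)$ ordered edge pairs. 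Your monotonicity-of-marking observation is exactly this, so the elaborate two-category potential argument is unnecessary overhead.
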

\begin{proof}
 We only make $O(1)$ calls to evaluate $f(\cdot, \cdot)$ for each pair $(u,v) \in E_\mathrm{in}$.
\end{proof}

After seeing the iterative approach, it is easier to understand the more advanced algorithm, which uses divide-and-conquer and binary search to replace the iterative processes.

\subsubsection{Peaking criterion by recursion.}

Macroscopically, we replace plain iteration  with a fully recursive process. We do this once in the beginning, as well as every time we remove a sink. Overall the algorithm makes $O(k \log n)$ `amortized' calls to the oracle. Recall that the main purpose of the peaking criterion is to place sinks and make the tree $T$ RC-viable.

%
%
%
%

\paragraph{A localized view.}  We start with a more intuitive, localized view of the recursion.
We evaluate $f(\cdot,\cdot)$ on sets of nodes of the form $V_{-v}(u)$ or $V_{-v}(u) \cup \{v\}$. If $f(V_{-v}(u),u) \leq \mathcal{T}$ then we mark all nodes in $V_{-v}(u)$. Sometimes we also mark the node $v$, if all but one of its neighbors are marked.

Over the course of the algorithm, we are given a node $v \in V$ (along with other information including $\mathtt{Marked}_{\mathrm{PC}}$), and for each neighbor $u$ of $v$ we decide whether to evaluate $a_u := f(V_{-v}(u) \cup \{v\}, v)$. As a basic principle to save costs, we do not wish to call the oracle if all nodes in $V_{-v}(u) \cup \{v\}$ are marked, or if $V_{-v}(u) \cup \{v\}$ contains a sink.

When we do get $a_u \leq \mathcal{T}$, we put all nodes in $V_{-v}(u)$ into $\mathtt{Marked}_{\mathrm{PC}}$. Moreover, if at least $|N(v) - 1|$ neighbors of $v$ are in $\mathtt{Marked}_{\mathrm{PC}}$, and by this time $v \notin S_{\mathrm{out}}$, we also put $v$ into $\mathtt{Marked}_{\mathrm{PC}}$. This part is the same in tree-climbing, and maintains \emph{an important invariant} regarding $\mathtt{Marked}_\mathrm{PC}$: if $u$ is marked but a neighbor $v$ is not, then all nodes in $V_{-v}(u)$ are marked, and $f(V_{-v}(u) \cup \{v\},v) \leq \mathcal{T}$.

On the other hand, if in fact we find that $a_u > \mathcal{T}$, we would wish to recurse into $T_{-v}(u)$, because one sink must be placed in it. Now we return to a more global view.

\paragraph{A global view.}

To maintain RC-viability we need to apply the oracle on various parts of $T$. In the iterative algorithm, this process is extremely repetitive. Now we wish to segregate different sets of nodes on the tree, so the oracle is only applied to separate parts.


\begin{definition}[Compartments and Boundaries]
	Let $T'=(V',E')$ be a subtree of $T=(V,E)$. The boundary $\delta T'$ of $T'$ is the set of all nodes in $T'$ that is a neighbor of some node in $V \backslash V'$. 
	
	Now given a set of nodes $W$ of a tree $T=(V,E)$, the set of compartments $\mathcal{C}_T(W)$ is a set of subtrees of $T$, where the union of all nodes is $V$, and for each $T' = (V',E') \in \mathcal{C}_T(W)$, $V'$ is a maximal set of nodes that induces a subtree $T''$ of $T$ such that $\delta T'' \subseteq W$.
\end{definition}

Intuitively, the set of compartments is induced by first removing $W$, so that $T$ is broken up into a forest of smaller trees, and for each of the small trees we re-add nodes in $W$ that were attached to it, where the reattached nodes are called the boundary. As opposed to partitioning, two compartments may share nodes at their boundaries. 

In the algorithm, we generate a sequence of sets $W_0 \subseteq W_1,...,W_t \subseteq V$ in the following manner: $W_0$ contains the tree median of $T$, and then to create $W_i$ from $W_{i-1}$ we simply add to $W_{i}$ the tree medians of every compartment in $\mathcal{C}_T(W_i)$.

For each $i$, we only make oracle calls of the form $f(V_{-v}(u),s)$ or $f(V_{-v}(u) \cup \{v\},s)$, and avoid choices of $(u,v)$ that will cause evaluation on overlapping sets, based on information gained on processing $W_{i-1}$ in the same way. In this way we only make essentially $O(1)$ `amortized' calls to the oracle for each $i$. For details see appendix.

After removing nodes via the reaching criterion, we only need to do this on a subtree of $T$, which we can assume takes the same time as on the full tree. One can see that $t = O(\log n)$ thus the peaking criterion takes at most $O(k \log n)$ amortized oracle calls. 

\subsubsection{Reaching criterion by Binary Search.}

Intuitively, with the reaching criterion we look for an edge $(u,v)$ in $T_{H(S)}$ so that $T_{-v}(u)$, which contains at least one sink, can be removed. 

Now given adjacent hubs $h_1$ and $h_2$, consider any subtree of $T$ rooted at $h_1$, in which $h_2$ is a descendent of $h_1$. Then exactly one of the following is true:

\begin{enumerate}[label=P\arabic{*}]
 \item There is an edge $(u,v)$ in the path $\Pi(h_1,h_2)$ between $h_1$ and $h_2$, where $u$ is a child of $v \neq h_1$, such that $T_{-v}(u)$ is recursively self-sufficient, but the subtree rooted at $v$ is not.
 \item Let $u$ be the child of $h_1$ that is on the path between $h_1$ and $h_2$. Then the subtree rooted at $u$, i.e. $T_{-h_1}(u)$, is recursively self-sufficient.
\end{enumerate}

As $h_1$ and $h_2$ are adjacent hubs, for any edge $(u,v)$ along the path, where $v \neq h_1$ is the parent of $u$, the subtree rooted at $v$ is recursively self sufficient only if the subtree rooted at $u$ is. Suppose we know that the subtree rooted at $h_2$ is recursively self-sufficient. 


In the iterative algorithm we move upwards from $h_2$ to $h_1$ gradually until we find such an edge, or upon reaching $h_1$; this can be replaced by a binary search. This idea will let us only use $O(k^2 \log n)$ calls; proper amortization with pruning can reduce this to $O(k \log n)$ oracle calls. See appendix. Theorem~\ref{theorem:FastBC} follows from  the above faster algorithm.

\section{Full problem: cost minimization}
\label{Section: Full Problem}

Given an algorithm for the bounded cost problem, it is straightforward to construct a \emph{weakly} polynomial time algorithm, by a binary search over possible values of $\mathcal{T}$ for the minimal $\mathcal{T}^*$ allowing evacuation with $k$ sinks. To produce a \emph{strongly} polynomial time algorithm, at a higher level, we wish to search among a finite, discrete set of possible values for $\mathcal{T}^*$.
This can be done by a {\em parametric searching} technique. 


\subsection{Iterative approach} 

We start by modifying the iterative algorithm for  bounded cost.  In that  algorithm, the specific value  of $\mathcal{T}$ dictates the contents of $T$, $S$, $S_\mathrm{out}$ etc., as well as which node pairs satisfy either of the two peaking criteria, at each step of Algorithms  \ref{alg:PCClimb}, \ref{alg:RCClimb}, \ref{alg:BoundedCostFull},; all these depend upon  the outcomes of comparisons of the form $f(\cdot,\cdot) \leq \mathcal{T}$.

The idea is to run a prametric search version of  Algorithm \ref{alg:BoundedCostFull}.   $\mathcal{T}$  will no longer be a constant; we {\em interfere} with the normal course of the algorithm by changing $\mathcal{T}$ during runtime. The decision to interfere is based on a \emph{threshold margin} $(\mathcal{T}^L, \mathcal{T}^H]$ that we maintain, to keep track of candidate values of $\mathcal{T}^*$. Initially, $(\mathcal{T}^L, \mathcal{T}^H] = (-\infty.+\infty]$, and $\mathcal{T} = 0$.

We step through Algorithm \ref{alg:BoundedCostFull}. Every time we evaluate $a = f( \cdot, \cdot )$, we set $\mathcal{T}$ based on the following, before making the comparison $a \leq \mathcal{T}$ and proceeding with the if-clause.

\begin{enumerate}
 \item If $a \leq \mathcal{T}^L$, set $\mathcal{T} = \mathcal{T}^L$, so the if-clause always resolves as $f(\cdot,\cdot) \leq \mathcal{T}$.
 \item If $a > \mathcal{T}^H$, set $\mathcal{T} = \mathcal{T}^H$, so the if-clause always resolves as $f(\cdot,\cdot) > \mathcal{T}$.
 \item If $a \in (\mathcal{T}^L, \mathcal{T}^H]$, run a separate clean, non-interfered instance of Algorithm \ref{alg:BoundedCostFull} with value $\mathcal{T} := a$, and observe the output.
 \begin{itemize}
  \item Output is `No': set $\mathcal{T}^L := a$, and $\mathcal{T} := a$, resolving the if-clause as $a = f(\cdot,\cdot) \leq \mathcal{T} = a$.
  \item Otherwise, set $\mathcal{T}^H := a$, and $\mathcal{T} := \mathcal{T}^L$.
 \end{itemize}
\end{enumerate}

This terminates with some $\mathcal{T} \in (\mathcal{T}^L, \mathcal{T}^H]$. We call this `Algorithm \ref{alg:BoundedCostFull} with interference'.

\begin{lemma}
 Let $(\mathcal{T}_{<},\mathcal{T}_{>}]$ be the threshold margin at the end of Algorithm \ref{alg:BoundedCostFull} with interference. Then $\mathcal{T}_{>} = \mathcal{T^*}$. In particular, we can then run Algorithm \ref{alg:BoundedCostFull} (non-interfered) on $\mathcal{T} := \mathcal{T}_{>}$ to retrieve the optimal feasible configuration.
 \label{lemma:UpperThresholdIsAnwser}
\end{lemma}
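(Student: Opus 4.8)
The plan is to establish two facts: that the invariant $\mathcal{T}^* \in (\mathcal{T}^L, \mathcal{T}^H]$ is maintained throughout the interfered run, and that the upper endpoint is eventually pinned down exactly to $\mathcal{T}^*$. Here $\mathcal{T}^* = \min_{|S|\leq k} F(S)$ is the optimal cost; since $f$ ranges over only finitely many values over connected subsets of $T_{\mathrm{in}}$, this minimum is attained and the clean feasibility test (non-interfered Algorithm~\ref{alg:BoundedCostFull}) returns a feasible configuration precisely for $\mathcal{T} \geq \mathcal{T}^*$ and `No' for $\mathcal{T} < \mathcal{T}^*$. I would record this monotonicity, with its closed boundary at $\mathcal{T}^*$, at the outset.

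First I would verify the margin invariant by induction over the interference steps. The margin changes only in case~3: when the clean test at $a$ answers `No', infeasibility gives $\mathcal{T}^* > a$, so setting $\mathcal{T}^L := a$ preserves $\mathcal{T}^* \in (\mathcal{T}^L, \mathcal{T}^H]$; when it answers `yes', feasibility gives $\mathcal{T}^* \leq a$, so setting $\mathcal{T}^H := a$ preserves it. Along the way I would note that $\mathcal{T}^H$ is non-increasing (in case~3b we set it to $a \leq \mathcal{T}^H$), so at termination $\mathcal{T}_{>} \geq \mathcal{T}^*$. It then remains to show $\mathcal{T}_{>} \leq \mathcal{T}^*$.

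Next I would prove a consistency claim: each comparison $a \leq \mathcal{T}$ in the interfered run is resolved exactly as $a \leq \mathcal{T}^*$ would be, with the single exception of a case-3b comparison in which $a = \mathcal{T}^*$. Checking the branches against the invariant is routine: cases~1, 2, 3a, and the subcase of 3b with $a > \mathcal{T}^*$ all agree with the $\mathcal{T}^*$-outcome, and only 3b with $a = \mathcal{T}^*$ forces the resolution $f > \mathcal{T}$ while the true outcome is $f \leq \mathcal{T}^*$. Consequently the interfered run proceeds in lock-step with the non-interfered run at threshold $\mathcal{T}^*$ up to, but excluding, the first such exceptional comparison. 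If that exceptional comparison ever fires, $\mathcal{T}^H$ is set to $a = \mathcal{T}^*$ and, being non-increasing and bounded below by $\mathcal{T}^*$, stays there, yielding $\mathcal{T}_{>} = \mathcal{T}^*$.

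The crux, and the step I expect to be the main obstacle, is ruling out the alternative in which no exceptional comparison ever fires. In that case the interfered run would coincide with the non-interfered run at $\mathcal{T}^*$ from start to finish, so it suffices to show that the non-interfered run at $\mathcal{T}^*$ must evaluate some comparison whose value equals exactly $\mathcal{T}^*$; such a comparison, landing inside $(\mathcal{T}^L,\mathcal{T}^H]$ and passing the clean test, is precisely an exceptional case-3b event, contradicting the assumption. To see it, pick $\epsilon > 0$ smaller than the least gap between distinct attainable values of $f$ and run Algorithm~\ref{alg:BoundedCostFull} at $\mathcal{T}^* - \epsilon$ and at $\mathcal{T}^*$ in parallel. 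Every comparison with value $a \neq \mathcal{T}^*$ resolves identically at the two thresholds, since no attainable value lies in $(\mathcal{T}^* - \epsilon, \mathcal{T}^*)$; hence the two runs stay in lock-step until the first comparison with value exactly $\mathcal{T}^*$. Were there none, the runs would be identical and produce identical output, yet the run at $\mathcal{T}^* - \epsilon$ returns `No' (infeasibility below $\mathcal{T}^*$) while the run at $\mathcal{T}^*$ returns a feasible configuration, a contradiction. Thus a comparison with value $\mathcal{T}^*$ must occur, which forces the exceptional event and completes $\mathcal{T}_{>} = \mathcal{T}^*$. Finally, for the `in particular' clause, running the non-interfered algorithm once more at $\mathcal{T} = \mathcal{T}_{>} = \mathcal{T}^*$ returns a feasible configuration of cost $\leq \mathcal{T}^*$; since every configuration has cost $\geq \min_{|S|\leq k} F(S) = \mathcal{T}^*$, this configuration is optimal.
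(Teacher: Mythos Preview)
Your proof is correct, but it follows a different route from the paper's. Both arguments begin with the invariant $\mathcal{T}^* \in (\mathcal{T}^L,\mathcal{T}^H]$, which you verify just as the paper does. From there the paths diverge. The paper observes that every value $a$ ever computed in the interfered run ends up either $\le \mathcal{T}_<$ or $\ge \mathcal{T}_>$ (because in case~3 the margin is reset to $a$ on one side, and the margins are monotone); it follows that the interfered run resolves every comparison exactly as a non-interfered run at any fixed $\mathcal{T}_0 \in [\mathcal{T}_<,\mathcal{T}_>)$ would. Hence the non-interfered algorithm gives the same output throughout $[\mathcal{T}_<,\mathcal{T}_>)$; since that output is `No' at $\mathcal{T}_<$, it is `No' on the whole half-open interval, forcing $\mathcal{T}^* \ge \mathcal{T}_>$.

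You instead compare the interfered run to the non-interfered run at $\mathcal{T}^*$, classify the single mismatch (case~3b with $a=\mathcal{T}^*$), and then argue by an $\epsilon$-perturbation that such a comparison must occur. This is sound, and your use of the finiteness of the range of $f$ to pick $\epsilon$ is legitimate. The trade-off is that the paper's argument is shorter and does not need to invoke discreteness of the value set at all: by working with the \emph{final} margin rather than with $\mathcal{T}^*$, it sidesteps the need to exhibit a comparison hitting $\mathcal{T}^*$ exactly. Your route, on the other hand, makes explicit the moment at which $\mathcal{T}^H$ is pinned to $\mathcal{T}^*$, which is a nice structural observation even if it costs an extra paragraph.
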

\begin{proof}[Proof of Lemma \ref{lemma:UpperThresholdIsAnwser}]
	First note that $\mathcal{T}^* \in (\mathcal{T}_{<},\mathcal{T}_{>}]$; $\mathcal{T}^* \leq \mathcal{T}_{>}$ because $\mathcal{T}^H$ is always set to be a feasible value of $\mathcal{T}$. Similarly, $\mathcal{T}^* > \mathcal{T}_{<}$ because $\mathcal{T}^L$ is always set to be a non-feasible value of $\mathcal{T}$.
	
	Then we show that, for any $\mathcal{T}_0 \in [\mathcal{T}_{<},\mathcal{T}_{>})$ (note the difference in half-openness of the interval), roughly speaking, the interfered algorithm runs in the same way as a non-interfered algorithm with $\mathcal{T} = \mathcal{T}_0$; more concretely, all if-clauses at line $\ref{alg:PCClimb:comparison}$ of $\Call{PC.Climb}$ and line $\ref{alg:RCClimb:comparison}$ of $\Call{RC.Climb}$ are resolved as if we ran the non-interfered algorithm with $\mathcal{T} = \mathcal{T}_0$.
	
	At line $\ref{alg:PCClimb:comparison}$ of $\Call{PC.Climb}$ and line $\ref{alg:RCClimb:comparison}$ of $\Call{RC.Climb}$, note that if $f(\cdot, \cdot)$ evaluates to $a < \mathcal{T}^*$, this implies $a \leq \mathcal{T}_{<} < \mathcal{T}_0$, and also the algorithm proceeds to resolve the if-clause with $f(\cdot, \cdot) \leq \mathcal{T} := a$, which is consistent with $f(\cdot, \cdot) \leq \mathcal{T}_0$. On the other hand, if $f(\cdot, \cdot)$ evaluates to $b \geq \mathcal{T}^*$, this implies $\mathcal{T}$ will be set to $\min(\mathcal{T},b)$, i.e. $b \geq \mathcal{T}_{>}$, and the algorithm proceeds to resolve the if-clause with $f(\cdot, \cdot) > \mathcal{T} := b  \geq \mathcal{T}_{>} > \mathcal{T}_0$, which is consistent with resolving with $f(\cdot, \cdot) > \mathcal{T}_0$.
	
	This, in turn, shows that Algorithm \ref{alg:BoundedCostFull} behaves in exactly the same way for any $\mathcal{T} \in [\mathcal{T}_{<},\mathcal{T}_{>})$. But we know that $\mathcal{T}_{<} < \mathcal{T}^*$, so $\mathcal{T}_{>}$ is the smallest value that is feasible i.e. at least $\mathcal{T}^*$, implying $\mathcal{T}^* = \mathcal{T}_{>}$.  \
\end{proof}

\begin{theorem}
 Minmax tree facility location can be solved in $O(n^2)$ calls to $\mathcal{A}$.
\end{theorem}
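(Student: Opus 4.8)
The plan is to analyze the cost of running ``Algorithm~\ref{alg:BoundedCostFull} with interference'' and show it amounts to $O(n^2)$ oracle calls. By Lemma~\ref{lemma:UpperThresholdIsAnwser}, a single execution of the interfered algorithm terminates with $\mathcal{T}_{>} = \mathcal{T}^*$, and one further non-interfered run on $\mathcal{T} := \mathcal{T}_{>}$ recovers the optimal configuration. So the entire cost-minimization problem is solved by one interfered run plus one ordinary bounded-cost run. Since Lemma~\ref{lemma:BoundedCostCalls} already guarantees that a plain (non-interfered) execution of Algorithm~\ref{alg:BoundedCostFull} makes $O(n)$ oracle calls, the final non-interfered run contributes only $O(n)$ calls. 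The whole theorem therefore reduces to bounding the number of oracle calls in the interfered run.

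First I would count the oracle calls incurred directly by the interfered algorithm as it steps through Algorithm~\ref{alg:BoundedCostFull}. The control flow of the interfered run mirrors that of the bounded-cost tree-climbing algorithm: at each evaluation point it computes one value $a = f(\cdot,\cdot)$ and then decides how to set $\mathcal{T}$. By Lemma~\ref{lemma:BoundedCostCalls} the underlying tree-climbing makes $O(n)$ such evaluations (one $O(1)$ batch per edge of $E_{\mathrm{in}}$), so the interfered run reaches at most $O(n)$ evaluation points and makes $O(n)$ ``direct'' oracle calls of its own. The subtlety is that each of these evaluation points may, in case~(iii), trigger a \emph{separate} clean invocation of Algorithm~\ref{alg:BoundedCostFull} with $\mathcal{T} := a$. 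Each such nested invocation itself costs $O(n)$ oracle calls by Lemma~\ref{lemma:BoundedCostCalls}. Hence the naive accounting gives $O(n)$ evaluation points, each possibly spawning an $O(n)$-call sub-instance, for a total of $O(n) \cdot O(n) = O(n^2)$ oracle calls.

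The key step, then, is simply to argue that this $O(n^2)$ bound is genuine and not worse: the nested invocation in case~(iii) is a non-interfered bounded-cost run, which does \emph{not} recurse further (it spawns no additional instances), so the recursion depth is exactly one level. Cases~(i) and~(ii), where $a \le \mathcal{T}^L$ or $a > \mathcal{T}^H$, resolve the if-clause using only the already-computed value $a$ against the maintained margin, with no nested call. Thus each of the $O(n)$ evaluation points contributes at most $1 + O(n) = O(n)$ calls, and summing over all evaluation points yields $O(n^2)$. Combined with the $O(n)$ calls of the concluding non-interfered run, the total is $O(n^2)$ calls to $\mathcal{A}$, as claimed.

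I expect the main obstacle to be confirming that the interfered run truly performs only $O(n)$ top-level evaluations — i.e., that interfering with $\mathcal{T}$ mid-execution does not alter the combinatorial progress of tree-climbing in a way that inflates the number of evaluation points beyond the $O(n)$ guaranteed by Lemma~\ref{lemma:BoundedCostCalls}. This is precisely what the proof of Lemma~\ref{lemma:UpperThresholdIsAnwser} underwrites: for every $\mathcal{T}_0 \in [\mathcal{T}_{<}, \mathcal{T}_{>})$ the interfered algorithm resolves each if-clause exactly as a non-interfered run with $\mathcal{T} = \mathcal{T}_0$ would, so its control flow coincides with a single legitimate bounded-cost execution and inherits the $O(n)$ evaluation-point bound. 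Once that correspondence is invoked, the $O(n^2)$ count follows immediately from multiplying evaluation points by the per-point cost of the spawned sub-instances.
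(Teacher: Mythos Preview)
Your proposal is correct and follows essentially the same approach as the paper: count $O(n)$ top-level evaluations in the interfered run, each spawning at most one non-interfered $O(n)$-call sub-instance, for $O(n^2)$ total. The only minor difference is in justifying the $O(n)$ top-level bound: the paper observes directly that Lemma~\ref{lemma:BoundedCostCalls}'s proof (``$O(1)$ calls per edge'') is structural and independent of $\mathcal{T}$, so it applies even when $\mathcal{T}$ varies; you instead route through the correspondence in the proof of Lemma~\ref{lemma:UpperThresholdIsAnwser} to identify the interfered run with some fixed-$\mathcal{T}$ run, which is valid but slightly more indirect.
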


\begin{proof}
 We always allow the interfered algorithm to make progress, albeit with changing values of $\mathcal{T}$, so Lemma \ref{lemma:BoundedCostCalls} still applies; $f( \cdot, \cdot)$ is evaluated at most $O(n)$ times in the interfered algorithm, thus we also launch a separate instance of Algorithm \ref{alg:BoundedCostFull} at most $O(n)$ times. 
\end{proof}

\subsection{Using divide-and-conquer and binary search} 

The above idea still works for applying RC, that we interfere whenever we evaluate $f(\cdot,\cdot)$. Thus we only interfere $O(k \log n)$ times, making $O(k^2 \log^2 n)$ total calls to the oracle.
%

But it does not work well with the peaking criterion; that the divide-and-conquer algorithm for the peaking criterion relies very strongly on amortization, and a naive application of interference will perform $O(n)$ feasibility tests, while we aim for $O(k \log n)$.

The basic idea is to filter through values of $f(\cdot,\cdot)$ were we decide to interfere. Intuitively, the divide and conquer algorithm can be organized in $t$ layers in reference to $W_1,...,W_t$ where $t = O(\log n)$, for each we evaluate $f(\cdot, \cdot)$ on certain pairs of nodes and sets. Each evaluation of $f(\cdot, \cdot)$ can be identified with an edge of $T$, thus in each layer we have at most $O(n)$ evaluations, producing a list of $O(n)$ values.

Thus, at each layer we evaluate $f(\cdot, \cdot)$, and binary search for a pair of values $a_<,a_>$ such that $a_< \leq \mathcal{T}^* < a_>$, making $O(\log n)$ calls to the bounded-cost algorithm, and then set $\mathcal{T} = a_<$ when proceeding to mark nodes and place sinks, before moving to the next layer.

This gives $O(\log^2 n)$ calls for a single application of the peaking criterion. As we only need to apply the peaking criterion $O(k)$ times, the resulting number of calls to the feasibility test is $O(k \log^2 n)$. Theorem~\ref{theorem:kSinkRunTime} then follows.

\subsection{Evacuation time with fixed sinks (optimal partitioning)}

Here we discuss the case where we have no control over the sink placement. Given $k$ sinks at leaves, and a suitable threshold $\mathcal{T}$, applying the reaching criterion will remove all nodes from the graph. The minimum such threshold can be considered the minimum time required to evacuate all nodes with only currently placed sinks; by finding this we can supersede the tree partitioning algorithm of Mamada et al. \cite{Mamada2005}. The cost minimization algorithm will follow a similar flavor as the above, except the peaking criterion will never need to be invoked; as a result, the time complexity is $O(k^2 \log^2 n)$ oracle calls, or $O(n k^2 \log^4 n)$ time.

\section{Conclusion}
Given a Dynamic flow network on a tree $G=(V,E)$  we derive an algorithm  for finding the locations of $k$ sinks that  minimize the maximum time needed to evacuate the entire graph.  Evacuation is modelled using dynamic confluent flows. All that was previously known was an  $O(n \log^2 n)$ time algorithm for solving the one-sink ($k=1$) case.  This paper gives the first polynomial time algorithm for solving the arbitrary $k$-sink  problem.

The algorithm was developed in two parts. Section \ref{Section: Bounded Cost}  derived an $O(n k\log^3 n)$ algorithm for finding a  placement of $k$ sinks that permits evacuating the tree in 
 $\le \mathcal{T}$ time for inputted $\mathcal{T}$ (or deciding that such a placement does not exist).   Section \ref{Section: Full Problem} showed how to modify this to an $O(n k^2 \log^5 n)$ algorithm for finding the minimum such $\mathcal{T}$ that permits evacuation.

\bibliography{arxiv}

\newpage
\appendix


\section{More details on tree climbing}

For reference we present pseudo-code for some subroutines used in the bounded cost algorithm. Algorithms~\ref{alg:CommitBlock} and \ref{alg:FindPartitionRecursiveSS} illustrate simple subroutines related to maintaining $\mathcal{P}$. Algorithms~\ref{alg:PCClimb}, $\ref{alg:RCClimb}$ and \ref{alg:BoundedCostFull} then describe the overall iterative bounded-cost algorithm.

\begin{algorithm}[h]
	\begin{algorithmic}[1]
		\State $T' = (V',E')$, rooted at $v \in V'$, sinks $S' \subseteq V'$
		\Comment $T'$ is recursively self-sufficient wrt $S'$
		\State $W : V' \rightarrow S'$, where $W(u)$ is a witness to Lemma \ref{lemma:RecursiveSS} for subtree rooted at $u$
		\State$\{P_s : s \in S'\}$, a collection of sets, all initialized to empty
		
		\State $T_0 := T'$
		\Comment We will delete nodes from $T_0$, so $T_0$ may become a forest
		
		\While{$T_0$ is non-empty}
		\State $T'_0 := $ arbitrary connected component of $T_0$, viewed as rooted subtree of $T'$
		\State $v := $ root of $T'_0$
		\State $s := W(v)$
		\State $P_s := P_s \cup \mathrm{BP}(v,s)$
		\State Remove all nodes in $\mathrm{BP}(v,s)$ from $T_0$
		\EndWhile
		
		\State $\{P_s : s \in S'\}$ is a partition witnessing self-sufficiency of $T'$
	\end{algorithmic}
	\caption{Finding partition for recursively self-sufficient trees}
	\label{alg:FindPartitionRecursiveSS}
\end{algorithm}

\begin{algorithm}[h]
	\begin{algorithmic}[1]
		\State $T = (V,E)$
		\State FIFO Queue $\mathcal{Q}_{\mathrm{PC}}$ over $V \times V$
		\State FIFO Queue $\mathcal{Q}_{\mathrm{RC}}$ (empty)
		\State $\mathtt{Marked}_{\mathrm{PC}} := \emptyset \subseteq V$
		\State $\hat{T}_{\mathrm{PC}} := T$
		
		\Procedure{PC.Climb}{}
		
		\If{$\mathcal{Q}_{\mathrm{PC}}$ is not empty}
		\State Dequeue $(u,v)$ from $\mathcal{Q}_{\mathrm{PC}}$
		\If{$f(V_{-v}(u) \cup \{v\},v) \leq \mathcal{T}$}
		\label{alg:PCClimb:comparison}
		\If{$v$ is a leaf of $\mathtt{Marked}_{\mathrm{PC}}$}
		\State $v' := $ parent of $v$ in $\hat{T}_{\mathrm{PC}}$
		\State Enqueue $(v,v')$ to $\mathcal{Q}_{\mathrm{PC}}$
		\label{alg:PCClimb:invariant}
		\State Add $v$ to $\mathtt{Marked}_{\mathrm{PC}}$
		\State Remove $v$ from $\hat{T}_{\mathrm{PC}}$
		\EndIf
		\Else
		\Comment Invoke peaking criterion
		\State Remove $T_{-v}(u)$ from $T$
		\State Add sink $v$ to $S$ and $S_{\mathrm{out}}$
		\State Commit $V_{-v}(u) \cup \{v\}$
		\EndIf
		\EndIf
		\EndProcedure
		
	\end{algorithmic}
	\caption{Tree climbing 1}
	\label{alg:PCClimb}
\end{algorithm}

%

For the peaking criterion, the iterative algorithm for the bounded cost problem repeats $\Call{PC.Climb}$ (Algorithm \ref{alg:PCClimb}) until $\mathcal{Q}_{\mathrm{PC}}$ is empty.

In the beginnining we identify the set of leaves $L$ of $T = T_{\mathrm{in}}$. Note that for any $u \in L$, $u$ supports $\{u\}$, offering a starting point for the peaking criterion. Any leaf $u \in L$ has exactly one neighbor $v$ in $T$; for every such $u \in L$, we add the pair $(u,v)$ to the FIFO queue $\mathcal{Q}_{\mathrm{PC}}$.

We also maintain a set $\mathtt{Marked}_{\mathrm{PC}}$ of nodes of $T_{\mathrm{in}}$, initially empty. A subtree $\hat{T}_{\mathrm{PC}}$ of $T$ is maintained where nodes in $\mathtt{Marked}_{\mathrm{PC}}$ are removed. Whenever we add an ordered pair of the form $(u,v)$ to $\mathcal{Q}_{\mathrm{PC}}$, $u$ is marked i.e. put into $\mathtt{Marked}_{\mathrm{PC}}$, and removed from $\hat{T}_{\mathrm{PC}}$.

By the end of this process, $T$ may still contain some nodes, but it is guaranteed to be RC-viable. Then, we can start applying the reaching criterion. $\hat{T}_{\mathrm{PC}}$ is the tree induced by $V \backslash \mathtt{Marked}_{\mathrm{PC}}$, and at this point is incidentally the hubtree $T_{H(S)}$.

Similar to the above, we have an other FIFO queue $\mathcal{Q}_{\mathrm{RC}}$ that contains ordered node pairs, and a set of nodes  $\mathtt{Marked}_{\mathrm{RC}}$ of $\hat{T}_{\mathrm{PC}}$, initialized to be empty, with a corresponding tree $T_{\mathrm{RC}}$. For technical reasons whenever a node is put in $\mathtt{Marked}_{\mathrm{RC}}$, it is also  put in $\mathtt{Marked}_{\mathrm{PC}}$. Initially, for every sink $s$ in $S$, which is now a leaf of $\hat{T}_{\mathrm{PC}}$, we take its parent $t$ in $\hat{T}_{\mathrm{PC}}$ and enqueue $(s,t)$ to $\mathcal{Q}_{\mathrm{RC}}$. 

\begin{algorithm}[ht]
	\begin{algorithmic}[1]
		\State $T = (V,E)$
		\State FIFO Queue $\mathcal{Q}_{\mathrm{PC}}$
		\State $\mathtt{Marked}_{\mathrm{PC}} \subseteq V$
		\State $\mathtt{Marked}_{\mathrm{RC}} := \empty \subseteq V \backslash
		\mathtt{Marked}_{\mathrm{PC}}$
		\State $\hat{T}_{\mathrm{RC}}$
		\State $\hat{T}_{\mathrm{PC}}$
		
		\Procedure{RC.Climb}{}
		
		\If{$\mathcal{Q}_{\mathrm{PC}}$ is empty}
		\If{$\mathcal{Q}_{\mathrm{RC}}$ is not empty}
		\State Dequeue $(u,v)$ from $\mathcal{Q}_{\mathrm{RC}}$
		\State $S' := $ set of sinks in $T_{-v}(u)$
		\For{Each $s \in S'$}
		\If{$f(\mathrm{BP}(v,s), s) \leq \mathcal{T}$}
		\label{alg:RCClimb:comparison}
		\State Put $v$ in $\mathtt{Marked}_{\mathrm{RC}}$ and $\mathtt{Marked}_{\mathrm{PC}}$
		\If{$v$ is leaf of $\hat{T}_{\mathrm{RC}}$}
		\State $v' := $ parent of $v$ in $\hat{T}_{\mathrm{RC}}$
		\State Enqueue $(v,v')$ to $\mathcal{Q}_{\mathrm{RC}}$
		\EndIf
		\State Remove $v$ from $\hat{T}_{\mathrm{RC}}$,$\hat{T}_{\mathrm{PC}}$
		\State Break for loop
		\Else
		\Comment Invoke reaching criterion
		\State Commit blocks for $T_{-v}(u)$
		\State Remove $T_{-v}(u)$ from $T$, $\hat{T}_{\mathrm{PC}}$ and $\hat{T}_{\mathrm{RC}}$
		\If{$v$ is leaf of $\hat{T}_{\mathrm{PC}}$}
		\State $v' := $ parent of $v$ in $\hat{T}_{\mathrm{PC}}$
		\State Enqueue $(v,v')$ to $\mathcal{Q}_{\mathrm{PC}}$
		\label{alg:RCClimb:invariant1}
		\EndIf
		\EndIf
		\EndFor
		\EndIf
		\EndIf
		\EndProcedure
		
	\end{algorithmic}
	\caption{Tree climbing 2}
	\label{alg:RCClimb}
\end{algorithm}

Then whenever $\mathcal{Q}_{\mathrm{PC}}$ is empty but $\mathcal{Q}_{\mathrm{RC}}$ is not, we call $\Call{RC.Climb}$ (Algorithm \ref{alg:RCClimb}) which carries out a test for the reaching criterion.

\begin{algorithm}[ht]
	\begin{algorithmic}[1]
		\State Given $\mathcal{T}$, $T_\mathrm{in} = (V_{\in},E_{\in})$
		\State $\hat{T}_{\mathrm{RC}} := \hat{T}_{\mathrm{PC}} := T := (V,E) := T_{\mathrm{in}}$
		\State $S := S_\mathrm{out} = \emptyset, \mathcal{P}_{\mathrm{out}} := \emptyset$. 
		
		\State $\mathcal{Q}_{\mathrm{PC}}$, $\mathcal{Q}_{\mathrm{RC}}$ empty
		
		
		\For{Each leaf $u$ of $T$}
		\State $v := $neighbor of $u$ in $\hat{T}_{\mathrm{PC}}$
		\State Enqueue $(u,v)$ to $\mathcal{Q}_{\mathrm{PC}}$
		\State Remove $u$ from $\hat{T}_{\mathrm{PC}}$ , $\hat{T}_{\mathrm{RC}}$
		\EndFor
		\Repeat
		\While{$\mathcal{Q}_{\mathrm{PC}}$ is not empty}
		\Call{PC.Climb}{}
		\EndWhile
		\Call{RC.Climb}{}
		\Until{$\mathcal{Q}_{\mathrm{RC}}$, $\mathcal{Q}_{\mathrm{PC}}$ are both empty}
		\State Output $S_\mathrm{out}$, $\mathcal{P}_{\mathrm{out}}$
	\end{algorithmic}
	\caption{Bounded cost algorithm}
	\label{alg:BoundedCostFull}
\end{algorithm}

For the correctness of Algorithm \ref{alg:BoundedCostFull}, it suffices to show that we maintain these invariants in Algorithms \ref{alg:PCClimb} and \ref{alg:RCClimb}:

\begin{enumerate}[label=\textrm{IVQ}\arabic{*}]
	\item Whenever we enqueue $(u,v)$ in $\mathcal{Q}_{\mathrm{PC}}$, we know that $f(V_{-v}(u) ,u) \leq \mathcal{T}$.
	\item $(v,v')$ is enqueued to $\mathcal{Q}_{\mathrm{RC}}$ at some point in the algorithm if and only if $T_{-v'}(v)$ is recursively self-sufficient.
\end{enumerate}

\begin{lemma}
	Invariants $\mathrm{IVQ1}$ and $\mathrm{IVQ2}$ are maintained for both Algorithms \ref{alg:PCClimb} and Algorithm \ref{alg:RCClimb}, thus also throughout Algorithm \ref{alg:BoundedCostFull}.
\end{lemma}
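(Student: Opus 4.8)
The plan is to prove both invariants simultaneously by induction on the sequence of enqueue operations performed over the entire run of Algorithm~\ref{alg:BoundedCostFull}, since a pair can enter either queue only through one of a handful of explicit enqueue sites. Before handling the invariants proper, I would first re-establish the auxiliary \emph{marking invariant} asserted in the body: at every moment, if $u \in \mathtt{Marked}_{\mathrm{PC}}$ has a neighbor $w \notin \mathtt{Marked}_{\mathrm{PC}}$, then all of $V_{-w}(u)$ is marked and $f(V_{-w}(u)\cup\{w\},w)\le\mathcal{T}$. This drops out of the observation that a node is marked only as part of an enqueue, and every such enqueue is immediately preceded by a successful comparison at line~\ref{alg:PCClimb:comparison}.

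For IVQ1 I would case-split on the three sites feeding $\mathcal{Q}_{\mathrm{PC}}$. At initialization each enqueued pair $(u,v)$ has $u$ a leaf, so $V_{-v}(u)=\{u\}$ and $f(\{u\},u)=0\le\mathcal{T}$ by the base axioms of $f$. The enqueue inside \textsc{PC.Climb} fires only when $v$ has just become a leaf of $\hat{T}_{\mathrm{PC}}$; at that point every marked neighbor $u_i$ of $v$ had its pair $(u_i,v)$ dequeued and passed the test $f(V_{-v}(u_i)\cup\{v\},v)\le\mathcal{T}$, so by \emph{max composition} $f(V_{-v'}(v),v)=\max_i f(V_{-v}(u_i)\cup\{v\},v)\le\mathcal{T}$, which is exactly IVQ1 for the freshly enqueued $(v,v')$. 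The enqueue in the reaching-criterion branch of \textsc{RC.Climb} is handled the same way, except that after $T_{-v}(u)$ is deleted the node $v$ is a hub whose only attachments besides its parent $v'$ are outstanding branches; RC-viability guarantees $f(V'\cup\{v\},v)\le\mathcal{T}$ for each such branch, and max composition again collapses these into $f(V_{-v'}(v),v)\le\mathcal{T}$.

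For IVQ2 I would argue the two directions separately, leaning on Lemma~\ref{lemma:RecursiveSS}. For soundness (enqueued $\Rightarrow$ recursively self-sufficient), the seed pairs $(s,t)$ placed for each sink $s$ satisfy it because $T_{-t}(s)$ is just $s$ together with its outstanding branches, which RC-viability lets $s$ serve alone; as $s$ is the only hub, $T_{-t}(s)$ is recursively self-sufficient. For the enqueue in the evacuation branch of \textsc{RC.Climb}, the dequeued pair $(u,v)$ gives, by the induction hypothesis, that $T_{-v}(u)$ is recursively self-sufficient, and the passed test $f(\mathrm{BP}(v,s),s)\le\mathcal{T}$ says $v$ can evacuate to $s\in S\cap V_{-v}(u)$; the node $v$ is enqueued only once it is a leaf of $\hat{T}_{\mathrm{RC}}$, i.e. once \emph{every} hub-child $u'$ of $v$ has already contributed a recursively self-sufficient $T_{-v}(u')$, so Lemma~\ref{lemma:RecursiveSS} promotes the subtree rooted at $v$, namely $T_{-v'}(v)$, to recursively self-sufficient.

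The completeness direction (recursively self-sufficient $\Rightarrow$ eventually enqueued) is where I expect the real difficulty, since it must track the interleaving of \textsc{PC.Climb}, \textsc{RC.Climb}, and the tree deletions, and show the bottom-up climb never stalls below a recursively self-sufficient hub-edge. I would establish it by a second, outer induction up the hub tree: sinks are enqueued at initialization, and for an internal hub-edge $(v,v')$ with $T_{-v'}(v)$ recursively self-sufficient, each hub-child subtree is itself recursively self-sufficient, so by the inner induction its pair is eventually enqueued and dequeued. The subtle point is that the witnessing partition assigns $v$ to a unique sink $s^\ast$ lying in exactly one child subtree $T_{-v}(u^\ast)$ (outstanding branches hold no sinks); sibling subtrees may be chopped off by the reaching criterion when $v$ cannot evacuate into them, but this is harmless because $v$ is not served through them in the witnessing partition, and I must argue that $T_{-v}(u^\ast)$ survives until $(u^\ast,v)$ is processed so that its evacuation test succeeds. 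Once all children are resolved, $v$ becomes a leaf of $\hat{T}_{\mathrm{RC}}$ and $(v,v')$ is enqueued. Combining soundness with this completeness argument yields IVQ2, and since every enqueue site of both queues has been covered, both invariants hold throughout Algorithm~\ref{alg:BoundedCostFull}.
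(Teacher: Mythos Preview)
Your plan is essentially the paper's own argument: case-split on the enqueue sites, collapse the marked neighbors via max composition for IVQ1, and for IVQ2 combine RC-viability with Lemma~\ref{lemma:RecursiveSS} in a bottom-up induction. The handling of the \textsc{RC.Climb} enqueue into $\mathcal{Q}_{\mathrm{PC}}$ (outstanding branches plus max composition) is exactly what the paper does in its last paragraph.

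Where you go further is on the completeness (``if'') half of IVQ2. The paper's ``conversely'' paragraph in fact only re-derives the contrapositive of soundness (not recursively self-sufficient $\Rightarrow$ not enqueued), so it proves the same direction twice and never establishes that a recursively self-sufficient hub-edge is eventually enqueued. Your outer induction up the hub tree, together with the observation that the witnessing sink $s^\ast$ sits in a single child subtree $T_{-v}(u^\ast)$ whose evacuation test will succeed while siblings may be harmlessly chopped off, is genuine additional content that the paper does not write down. That extra work is what is actually needed for the full biconditional as stated; the paper effectively only uses, and only proves, the soundness direction.
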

\begin{proof}	
	To see this for Algorithm \ref{alg:PCClimb}, note that only line \ref{alg:PCClimb:invariant} enqueues $(v,v')$ to $\mathcal{Q}_{\mathrm{PC}}$, which happens only when for every neighbor $u$ of $v$ (except $v'$) $u$ is marked, i.e. $f(V_{-v}(u) \cup \{v\},v) \leq \mathcal{T}$, which by max composition implies $f(V_{-v'}(v),v) \leq \mathcal{T}$. Thus Algorithm \ref{alg:PCClimb} applies the peaking criterion correctly.
	
	To see this is for Algorithm \ref{alg:RCClimb}, first note that by only proceeding when $\mathcal{Q}_{\mathrm{PC}}$ is empty, we ensured that $T$ is RC-viable. We then show the invariants inductively. We are given $v$ and a neighbor $v'$.
	
	First of all it is trivial if $v$ is a sink. Now suppose by the inductive hypothesis, every neighbor $u$ of $v$ in $T$ where $T_{-v}(u)$ has a sink, except $v'$, is such that $T_{-v}(u)$ is recursively self-sufficient. By the processing order and IH, we know that $(u,v)$ must have been enqueued previously and $u$ is in $\mathtt{Marked}_{\mathrm{RC}}$, thus we will arrive at the pair $(v,v')$. Then $(v,v')$ is enqueued only if for some sink $s$ in $T_{-v'}(v)$ we have $f(\mathrm{BP}(v',s),s) \leq \mathcal{T}$, which by Lemma \ref{lemma:RecursiveSS} and IH means $T_{-v'}(v)$ is recursively self-sufficient.
	
	Conversely, if $T_{-v'}(v)$ is not recursively self-sufficient, then either for one neighbor $u$ of $v$ except $v'$ we have that $T_{-v}(u)$ is not recursively self-sufficient, or that there exists $u$ such that for every sink $s$ in $T_{-v}(u)$ we have $f(\mathrm{BP}(v',s),s) > \mathcal{T}$. In the former case, $(u,v)$ would not have entered the queue yet, and in the latter case we will remove $T_{-v}(u)$, thus in either case we do not enqueue $(v,v')$.
	
	Note that line \ref{alg:RCClimb:invariant1} also enqueues to $\mathcal{Q}$, thus we need that invariant $\mathrm{IVQ1}$ is also maintained. Note that $T$ is known to be RC-viable when \Call{PC.Climb}{} is called, thus $v$ can serve  outstanding branches attached to it. On the other hand, for all neighbors of $u$ except $v'$ where $T_{-v}(u)$ contains a sink, $T_{-v}(u)$ is not in $\hat{T}_{\mathrm{PC}}$ (removed due to reaching criterion), so $T_{-v'}(v)$ only contains $v$ and outstanding branches attached to it, thus $T_{-v'}(v) \leq \mathcal{T}$ by max-composition.
\end{proof}

\section{Detailed description of recursive algorithm for PC}
We recall the definition of compartments.

\begin{definition}[Compartments and Boundaries]
	Let $T'=(V',E')$ be a subtree of $T=(V,E)$. The boundary $\delta T'$ of $T'$ is defined to be the set of every nodes in $T'$ that is a neighbor of some node in $V \backslash V'$.
	
	Now given a set of nodes $W$ of a tree $T=(V,E)$, the set of compartments $\mathcal{C}_T(W)$ is a set of subtrees of $T$ with the following properties:
	
	\begin{enumerate}[label=\arabic{*}.]
		\item For each $T' = (V',E') \in \mathcal{C}_T(W)$, $V'$ is a maximal set of nodes that induces a subtree $T''$ of $T$ such that $\delta T'' \subseteq W$.
		\item $\cup_{(V',E') \in \mathcal{C}_T(W)} V' = V$
	\end{enumerate}
\end{definition}

Intuitively, the set of compartments is induced by first removing $W$, so that $T$ is broken up into a forest of smaller trees, and for each of the small trees we re-add nodes in $W$ that were attached to it, where the reattached nodes are called the boundary. As opposed to partitioning, two compartments may share nodes at their boundaries. 

Then given $W$ and $\mathtt{Marked}_\mathrm{PC}$, we define the process of `calling the oracle on $\mathcal{C}_T(W)$', which does the following for every $T' = (V',E') \in \mathcal{C}_T(W)$.
If $w$ is the only node in $\delta T'$ that is not marked, we look at the subtree of the form $T_{-w}(u) \cup \{w\}$ where $u \in V'$. If $T_{-w}(u) \cup \{w\}$ does not contain a sink, then we treat $(u,w)$ as a possible pair for PC, and evaluate  $f(V_{-w}(u) \cup \{w\},w)$. Then we act according to the outcome:
\begin{itemize}
\item If $f(V_{-w}(u) \cup \{w\},w) \leq \mathcal{T}$ then the entire $V_{-w}(u) \cup \{w\}$ is marked.
\item If $f(V_{-w}(u) \cup \{w\},w) > \mathcal{T}$, evaluate $f(V_{-w}(u),u)$; if now $f(V_{-w}(u),u) \leq \mathcal{T}$, the peaking criterion is invoked, a sink is placed at $w$, and all nodes in $U$ would be marked.
\end{itemize}

We can see that the oracle is specifically not called in the following circumstances:

\begin{enumerate}[label=\textrm{R}\arabic{*}]
	\item If more than one node in $\delta T'$ is not marked. \label{bulletp:Recursion1}
	\item If all nodes in $\delta T'$ are already marked.\label{bulletp:Recursion2}
\end{enumerate}

%
%
%

From this, we will see soon that when we call the oracle on $\mathcal{C}_T(W)$, we apply the oracle to sets totalling $O(n)$ nodes, or exactly $n-1$ edges. Thus the total time is equivalent to the time running the oracle on $T$, with constant factor (and additive linear) overhead. For this reason we also call this an amortized oracle call.

To make sure this is the case, we need to specify how $W$ is constructed and $\mathtt{Marked}_\mathrm{PC}$ is formed as we progress.

Let $\rho$ be the tree median of $T$, and let $W_0 := \{ \rho \}$. The  set of compartments $\mathcal{C}_T(W_0)$ then simply consists of trees of the form $T_{-\rho}(u) \cup \{\rho\}$ for every $u \in N(\rho)$. We make an oracle call on $\mathcal{C}_T(W_0)$, and mark nodes as given above.

Then we create a new set $W_1$, first a copy of $W_0$, and for every compartment $T'$ of $\mathcal{C}_T(W_0)$, we take the tree median $\rho'$ of $T' \backslash \delta T'$ and put it in $W_1$.

We again make an oracle call on $\mathcal{C}_T(W_1)$, mark nodes and place sinks when applicable, and create $W_2$ in a similar fashion, and so on.

This generates a sequence of subsets $W_0$,..,$W_t$ for some $t \geq 0$, where we stop when $W_t = V$. The period from creating $W_i$ to making an oracle call on $\mathcal{C}_T(W_i)$ is called epoch $i$.

Note that $t$ can be at most $O(\log n)$, so we claim that this makes $O(\log n)$ amortized oracle calls. Both the time bound and the correctness need to be established through the following observation. 


\begin{lemma}
	Given a subtree of the form $T_{-v}(u)$, suppose $f(V_{-v}(u) \cup \{v\}, v) \leq \mathcal{T}$. Let $r \geq 0$ be the smallest number such that $v \in W_r$. Then exactly one of the following will occur:
	
	\begin{enumerate}[label=Case \arabic{*}]
		\item $v$ was already marked during an earlier epoch $i < r$, 
		\item $T_{-v}(u)$ contains a sink, placed in an earlier epoch $j < r$.

		\item The algorithm never makes an oracle call of the form $f(V_{-v'}(u') \cup \{v'\}, v')$ where $(V_{-v}(u) \subseteq V_{-v'}(u') \cup \{v'\}$ in any epoch, and $f(V_{-v}(u), v) > \mathcal{T}$
		\item The algorithm evaluates $f(V_{-v}(u) \cup \{v\}, v)$ and $f(V_{-v}(u),u)$ in epoch $r$.
	\end{enumerate}
	\label{lemma:ExclusionRecusrivePC}
\end{lemma}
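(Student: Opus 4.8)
The statement is a case analysis that partitions the possible behaviors of the recursive peaking-criterion algorithm with respect to a fixed subtree $T_{-v}(u)$ satisfying $f(V_{-v}(u)\cup\{v\},v)\leq\mathcal{T}$. My plan is to argue that the four cases are mutually exclusive and exhaustive by tracing through the epoch structure $W_0\subseteq W_1\subseteq\cdots\subseteq W_t$ and the precise moment $r$ at which $v$ first enters $W_r$. The key organizing principle is that oracle calls are suppressed exactly under conditions \ref{bulletp:Recursion1} and \ref{bulletp:Recursion2} (more than one unmarked boundary node, or all boundary nodes marked) and that a sink or mark, once placed, is never undone; so the entire argument is about \emph{when} relative to epoch $r$ an evaluation on a set containing $V_{-v}(u)$ could occur.

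\textbf{Key steps.} First I would establish exclusivity of Cases 1 and 2: if $v$ was marked in an epoch $i<r$, then by the invariant maintained for $\mathtt{Marked}_\mathrm{PC}$ (recorded in the ``localized view'': if $u$ is marked but its neighbor $v$ is not, all of $V_{-v}(u)$ is marked and the cost bound holds) the subtree $T_{-v}(u)$ would already be accounted for, and similarly a sink placed in $T_{-v}(u)$ during $j<r$ removes those nodes from consideration; in either situation we never re-evaluate on $V_{-v}(u)$, so Cases 3 and 4 cannot also hold. Second, assuming neither Case 1 nor Case 2 occurs, I would show that the first epoch in which a boundary node of the relevant compartment becomes eligible (exactly one unmarked boundary node, no contained sink) is epoch $r$, because $v$ entering $W_r$ is precisely what makes $v$ a boundary node of the compartment containing $T_{-v}(u)$; before epoch $r$, either $v\notin W_i$ so $u$ and $v$ lie in the same compartment interior and the pair $(u,v)$ is not tested, or more than one boundary node is unmarked, triggering \ref{bulletp:Recursion1}. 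Third, at epoch $r$ itself I would split on the outcome: either the eligibility conditions are met and the algorithm evaluates $f(V_{-v}(u)\cup\{v\},v)$ and then $f(V_{-v}(u),u)$ (Case 4, using the definition of ``calling the oracle on $\mathcal{C}_T(W)$'' which evaluates both quantities in sequence), or the conditions fail permanently—meaning no superset of the form $V_{-v'}(u')\cup\{v'\}$ is ever evaluated—which forces $f(V_{-v}(u),v)>\mathcal{T}$ and yields Case 3.

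\textbf{Main obstacle.} The delicate point will be Case 3: ruling out that \emph{some larger} set $V_{-v'}(u')\cup\{v'\}$ with $V_{-v}(u)\subseteq V_{-v'}(u')\cup\{v'\}$ is evaluated in a later epoch, and simultaneously deducing $f(V_{-v}(u),v)>\mathcal{T}$. Here I would argue contrapositively: if no such evaluation ever happens while $v$ has already entered $W_r$, then by the compartment construction the only way $v$ avoids being the unique unmarked boundary node of its compartment at epoch $r$ is that some descendant region was never marked, and by the invariant and max composition this failure to mark propagates up to give $f(V_{-v}(u),v)>\mathcal{T}$. The subtlety is that $v$ may have multiple unmarked neighbors across several epochs, so I must track carefully that the ``exactly one unmarked boundary node'' condition is first satisfiable no earlier than when all but one of $v$'s branches have been marked, and tie this to the hypothesis $f(V_{-v}(u)\cup\{v\},v)\leq\mathcal{T}$ to guarantee the branch $V_{-v}(u)$ itself would have been marked had an evaluation occurred. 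Exhaustiveness then follows since the negations of Cases 1, 2, and 4 together force Case 3.
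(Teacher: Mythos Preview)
Your overall organization is sound --- Cases 1, 2, 4 are easily seen to be mutually exclusive, and the real work is showing that the negation of Cases 1, 2, 4 forces Case 3. However, your mechanism for deducing Case 3 has a genuine gap.

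The paper proves the lemma by \emph{induction on $r$}. In the inductive step (for $r=r'+1$), assuming Cases 1, 2, 4 all fail, one observes that the compartment of $\mathcal{C}_T(W_r)$ containing both $u$ and $v$ must have another unmarked boundary node $v'\neq v$ (this is why the oracle did not evaluate on $(u,v)$, by rule \ref{bulletp:Recursion1}). Crucially, by the way the $W_i$ are built (tree medians of the previous epoch's compartments), this $v'$ already lies in some $W_l$ with $l<r$. One then applies the inductive hypothesis to each pair $(u',v')$ where $u'$ is a neighbor of $v'$ outside the compartment: since $v'$ is unmarked and $V_{-v'}(u')\subseteq V_{-v}(u)$ contains no sink, the IH forces Case 3 or Case 4 for $(u',v')$. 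If Case 4 held for every such $u'$ and every evaluation returned $\leq\mathcal{T}$, then $v'$ would have been marked --- contradiction. So for some $u'$ we obtain cost exceeding $\mathcal{T}$ on a subtree of $V_{-v}(u)$, which is what yields Case 3.

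Your proposal instead asserts that ``by the invariant and max composition this failure to mark propagates up to give $f(V_{-v}(u),v)>\mathcal{T}$.'' But the invariant you cite (a marked node whose neighbor is unmarked implies its subtree is marked with cost $\leq\mathcal{T}$) gives information only about \emph{marked} nodes; an unmarked boundary node $v'$ tells you nothing by itself. To extract a cost lower bound you need to know that the subtree at $v'$ was actually \emph{evaluated} at some earlier epoch and found to exceed $\mathcal{T}$, not merely that it was never marked --- and establishing that such an evaluation occurred (or that Case 3 holds recursively for $v'$) is precisely the role of the inductive hypothesis. Without setting up the induction on $r$, your ``propagation'' step has no foundation, and the argument for Case 3 does not go through.
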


Before we proceed to the proof, we note that the second case rules out the possibility that a sink placed in epoch $r$ will affect the process of making a oracle call on $\mathcal{C}_T(W_r)$, so that the oracle call on $\mathcal{C}_T(W_r)$ is well-defined.

The rationale of the third case is that the algorithm already knows implicitly $f(V_{-v}(u), v) > \mathcal{T}$ hence also knows that no sink outside $V_{-v}(u)$ can serve a set that contains $V_{-v}(u)$. Even better, this information also propagates between compartments; that a compartment $\mathcal{C}$ far from $u,v$ `knows' that no sink placed within itself can serve a subtree of $T$ rooted in $\mathcal{C}$ that contains $V_{-v}(u)$, just by observing whether some of its boundary nodes are marked, hence we can avoid calling the oracle excessively.

\begin{proof}

	We prove by induction on $r$. If $r = 0$ then the algorithm must evaluate $f(V_{-v}(u) \cup \{v\}, v)$.
	
	Now assume IH to be true for $r = r' \geq 0$. Consider $r = r' + 1$.
	
	The mutual exclusion among cases 1,2 and 4 is clear, so suppose the algorithm does not evaluate $f(V_{-v}(u) \cup \{v\}, v)$ at epoch $r'+1$, and $v$ was not already marked in any earlier epoch i.e. not at the start of epoch $r'+1$, and also assume no sink is placed in $V_{-v}(u)$.
	
	Let $T' = (V',E')$ be the compartment $\mathcal{C} \in \mathcal{C}_T(W_{r'+1})$ that contains $u$ (as well as $v$). Then there exists another node $v'$ on the boundary of $T'$ (i.e. $v' \in \delta T'$) such that $v'$ is not marked at the start of epoch $r'+1$. 
	
	Let $u'$  be any neighbor of $v'$ \emph{outside} $T'$.  
	
	Note that $V_{-v'}(u') \subseteq V_{-v}(u)$, so by assumption $V_{-v'}(u')$ must also contain no sink. Because $v'$ is also not marked, IH implies that either the algorithm evaluated $f(V_{-v'(u')}\cup \{v'\}, v')$ or it was already known in epoch $r'$ that $f(V_{-v'}(u'),u') > \mathcal{T}$. The latter case settles the proof. In the former case, we again have two possibilities: either $f(V_{-v'(u')}\cup \{v'\}, v') \leq \mathcal{T}$ or $f(V_{-v'(u')}\cup \{v'\}, v') > \mathcal{T}$; in the latter case again we are done, so the remaining case is $f(V_{-v'(u')}\cup \{v'\}, v') \leq \mathcal{T}$.
	
	Note that the choice of $u'$ was arbitrary; if we can find an other neighbor $u''$ of $v'$ that is not in $T'$ so that the case $f(V_{-v'(u'')}\cup \{v'\}, v') \leq \mathcal{T}$ does not occur then we are done. Thus the actual remaining case is that $f(V_{-v'(u';)}\cup \{v'\}, v') \leq \mathcal{T}$ for \emph{every} neighbor $u''$ of $v'$ outside $T'$. However, in this case the algorithm would have marked $v'$, violating our assumptions. So IH holds for $r=r'+1$.
\end{proof}

Cases 1-3 in the above lemma characterize the circumstances where we can avoid calling the oracle. We can then prove the following.

\begin{lemma}
For any $0 \leq i \leq t$, making an oracle call on $\mathcal{C}_T(W_i)$ takes time $O(n +t_\mathcal{A}(n))$.
\end{lemma}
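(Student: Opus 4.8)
The plan is to separate the running time of ``calling the oracle on $\mathcal{C}_T(W_i)$'' into two parts: the \emph{bookkeeping} cost of enumerating $\mathcal{C}_T(W_i)$, locating the boundaries $\delta T'$, testing the marked status of boundary nodes, and performing the resulting marking and sink placements; and the \emph{oracle} cost of the evaluations of $f$ themselves. I would first show the bookkeeping is $O(n)$. The key structural fact is that the compartments partition the edge set of $T$: deleting $W_i$ breaks $T$ into a forest, and re-attaching the incident nodes of $W_i$ assigns every edge of $T$ to exactly one compartment. Consequently $\sum_{(V',E') \in \mathcal{C}_T(W_i)} |E'| = n-1$, and since a node $w \in W_i$ lies on the boundary of at most $\deg(w)$ compartments we also get $\sum_{T'} |\delta T'| = O(n)$. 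Hence all compartments, their boundaries, and the per-compartment tests (rules~\ref{bulletp:Recursion1} and \ref{bulletp:Recursion2}, and the check for a unique unmarked boundary node) can be produced by a single traversal of $T$ in $O(n)$ time, with the marking and sink-placement updates touching each node $O(1)$ times.

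For the oracle cost the goal is to show that the sets on which $f$ is actually evaluated in this epoch are edge-disjoint subtrees of $T$, so their sizes sum to at most $n-1$ edges (at most $2(n-1)$ once we also count the secondary evaluation $f(V_{-w}(u),u)$ that accompanies an invocation of the peaking criterion). I would argue this by charging: whenever we query a compartment $T'$ with unique unmarked boundary node $w$ and interior neighbor $u$, charge the evaluation to the edge $(w,u)$; since this interior-incident edge belongs to the single compartment $T'$, distinct queries are charged to distinct edges. To upgrade this to genuine edge-disjointness of the query \emph{sets} $V_{-w}(u) \cup \{w\}$ --- which are defined on all of $T$ and could a priori reach far beyond $T'$ --- I would invoke Lemma~\ref{lemma:ExclusionRecusrivePC} together with the marking invariant. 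By rules~\ref{bulletp:Recursion1}, \ref{bulletp:Recursion2} and the convention that no call is made when $V_{-w}(u)$ is entirely marked or already contains a sink, a query is issued only when $u$ begins a genuinely fresh, unmarked, sink-free branch. The marking invariant then guarantees that any marked node encountered while descending from $w$ through $u$ seals off everything beyond it, so the edges that $V_{-w}(u)$ actually contributes lie in the fresh region delimited by $w$; Case~3 of Lemma~\ref{lemma:ExclusionRecusrivePC}, which propagates the implicit knowledge $f(V_{-w}(u),w) > \mathcal{T}$ between compartments, then rules out two distinct queries whose fresh regions overlap.

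Finally, writing the query sizes as $n_1,\dots,n_q$ with $\sum_j n_j = O(n)$, I would combine the individual oracle times using superadditivity of $t_\mathcal{A}$: since $t_\mathcal{A}$ is at least linear, $\sum_j t_\mathcal{A}(n_j) \le t_\mathcal{A}\!\left(\sum_j n_j\right) = O(t_\mathcal{A}(n))$. Adding the $O(n)$ bookkeeping then yields the claimed $O(n + t_\mathcal{A}(n))$. I expect the main obstacle to be precisely this edge-disjointness step: because the sets $V_{-w}(u) \cup \{w\}$ are global objects rather than being confined to their compartments, the argument must lean on the marking invariant and on Lemma~\ref{lemma:ExclusionRecusrivePC} to certify that each edge of $T$ is touched by only $O(1)$ of the epoch's evaluations. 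By comparison, the edge-partition bound for the bookkeeping and the superadditivity step are routine.
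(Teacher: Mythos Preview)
Your high-level decomposition is exactly the paper's: bookkeeping is $O(n)$ because the compartments partition the edge set, and the oracle cost is bounded by showing the query sets form an edge-disjoint family and then invoking superadditivity of $t_{\mathcal{A}}$. That skeleton is correct.

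The gap is in your edge-disjointness argument. The ``fresh region'' reasoning does not bound oracle cost: the oracle evaluates $f$ on the \emph{full} set $V_{-w}(u)\cup\{w\}$, and its running time depends on that full size, not on how many of those nodes are unmarked. So saying that marked nodes ``seal off everything beyond'' does not reduce the charge of a single call; you genuinely need the full sets to be edge-disjoint. Your appeal to Lemma~\ref{lemma:ExclusionRecusrivePC} also does not go through as stated: that lemma is predicated on $f(V_{-v}(u)\cup\{v\},v)\le\mathcal{T}$, whereas the query sets you must control include precisely those where this fails (that is where the peaking criterion fires). Case~3 therefore does not cover the pairs you need to exclude.

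The paper establishes edge-disjointness directly, exploiting the fact that two sets of the form $V_{-w}(u)\cup\{w\}$ in a tree are either nested or edge-disjoint. It then shows that if $V_{-w_1}(u_1)\cup\{w_1\}\subseteq V_{-w_2}(u_2)\cup\{w_2\}$ with $w_1,w_2\in W_i$ both unmarked, the algorithm cannot evaluate both. The argument is a short path analysis: look at the nodes of $W_i$ on the path from $w_1$ to $w_2$. If the inner call is made, one checks that some $\tau\in W_i$ on this path, lying on the boundary of the compartment containing $w_2$, remains unmarked, so rule~\ref{bulletp:Recursion1} blocks the outer call. Conversely, if the outer call is made, every boundary node of that compartment other than $w_2$ is marked, and the marking invariant then forces $w_1$ (or $w_2$) to be marked, contradicting the assumption. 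This replaces your Lemma~\ref{lemma:ExclusionRecusrivePC} invocation with a self-contained combinatorial argument that works uniformly regardless of the value of $f$ on the sets involved.
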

\begin{proof}
	Given any epoch $i$, it suffices to show that for any distinct $w_1, w_2 \in W_i$ and subtrees of the form $T_{-w_1}(u_1)$, $T_{-w_2}(u_1)$, if $V_{-w_1}(u_1) \cup \{w_1\} \subseteq V_{-w_2}(u_2) \cup \{w_2\}$. then the oracle will never evaluate both of $f(V_{-w_1}(u_1) \cup \{w_1\}, w_1)$  and $f(V_{-w_2}(u_2) \cup \{w_2\}, w_2)$. 
	
	WLOG assume $V_{-w_1}(u_1) \cup \{w_1\} \subseteq V_{-w_2}(u_2) \cup \{w_2\}$, and neither of $w_1$ and $w_2$ are marked.
	
	First suppose that the algorithm evaluates $f(V_{-w_1}(u_1) \cup \{w_1\},w_1)$. Let $\tau$ be a node on the path from $w_1$ to $w_2$, 
	where $\tau \in W_i$ but $\tau \neq w_1, w_2$. If no such node exists, then in fact there is a compartment $\mathcal{C} \in \mathcal{C}_T(W_i)$ that contains both $w_1$ and $w_2$, $u_2$, in which case we do not evaluate $f(V_{-w_2}(u_2) \cup \{w_2\}, w_2)$, because two nodes on the boundary of $\mathcal{C}$ are unmarked. So assume that $\tau$ exists. We can further assume that $\tau$ is unmarked; because otherwise either $w_1$ or $w_2$ is marked, violating our assumptions. This means we can assume that any node on the path between $w_1$ and $w_2$ within $W$ is not marked. In particular, there exists a $\tau$ in a same compartment $\mathcal{C}$ as $w_2$ that is not marked, which again forces us not to evaluate $f(V_{-w_2}(u_2) \cup \{w_2\}, w_2)$.
	
	Then suppose that the algorithm evaluates $f(V_{-w_2}(u_2) \cup \{w_2\},w_2)$. Let $\mathcal{C}$ be the compartment that contains both $w_2$ and $u_2$. This means that any node except $w_2$ on the boundary of $\mathcal{C}$ is marked; so if $w_1$ is also in $\mathcal{C}$ our assumption will be contradicted. Thus we can again assume that $w_1$ is not in $\mathcal{C}$. Then there exists $\tau \in W_i$ on the boundary of $\mathcal{C}$ along the path from $w_1$ to $w_2$ that is marked, which implies either $w_1$ is marked or $w_2$ is marked, a contradiction. Thus under our assumptions the algorithm actually never evaluates $f(V_{-w_2}(u_2) \cup \{w_2\},w_2)$.
	
	This in turn implies that in each epoch, there is a set of edge-disjoint set of subtrees $\mathcal{G}$ such that on each $T'=(V',E') \in \mathcal{G}$ we only run the oracle at most twice. Then because $t_{\mathcal{A}}(n) = \Omega(n)$, we see that the time spent on the oracle is bounded above by $2\sum_{(V',E') \in \mathcal{G}} t_{\mathcal{A}}(|E'|) \leq t_{\mathcal{A}}(n)$. The overhead is $O(n)$ so the final time bound is $O(n + t_{\mathcal{A}}(n))$.
%
%
%
%
\end{proof}

{\bf Correctness.} Consider a subtree of $T$ the form $T_{-v}(u) \cup \{v\}$ where $f(V_{-v}(u), u) \leq \mathcal{T}$ but $f(V_{-v}(u) \cup \{v\}, v) > \mathcal{T}$, i.e. the pair $u,v$ satisfies PC. We also assume non-degeneracy, that there exists no $s \in V$ where $f(V,s) \leq \mathcal{T}$, so that a sink must be placed at $u$ but not within $V_{-v}(u)$. We show that the algorithm evaluates both $f(V_{-v}(u), u)$ and $f(V_{-v}(u) \cup \{v\}, v)$, which puts a sink at $u$.

Let $r \geq 0$ be the smallest number such that $v \in W_{r}$, and consider epoch $r$. Because $f(V_{-v}(u) \cup \{v\}, v) > \mathcal{T}$, at least one node in $V_{-v}(u)$ is not marked, which in turn implies $u$ is not marked.

Now let $T' = (V',E') \in \mathcal{C}_T(W_r)$ be the compartment that contains $v$ and $u$. Pick an arbitrary $w \in \delta T'_l \backslash \{u\}$. First of all, note that by construction $v$ is the tree median of a compartment in epoch $r-1$ where $u'$ is on the boundary; so $w \in W_{r-1}$ and first appears in some $W_l$ where $l < r$.

Let $u'$ be a neighbor of $w$ that is not in $T'$. Because $f(V_{-v}(u), u) \leq \mathcal{T}$ and also due to non-degeneracy, $V_{-w}(u')$ can not contain a sink in epoch $l$. This implies either Case 1 or Case 4 in Lemma~\ref{lemma:ExclusionRecusrivePC}. In Case 1 $w$ is marked; if Case 1 does not hold, for Case 4 note that the choice of $u'$ was arbitrary, so it must hold for any neighbor $u'$ of $w$ that is not in $T'$, in which case $w$ would still have been marked after epoch $l$.

Thus we know that $w$ is already marked in epoch $r$. The choice of $w$ was again arbitrary, thus all nodes in $\delta T'_l \backslash \{u\}$ are marked, and the algorithm will indeed evaluate $f(V_{-v}(u) \cup \{v\}, v)$ and find that it exceeds $\mathcal{T}$; it will also evaluate $f(V_{-v}(u), v)$ and find that it is no more than $\mathcal{T}$. This suffices for correctness.

\section{More Details for Reaching Criterion by Binary Search}
Recall that we work with $T=(V,E)$ and a set of sinks $S$ placed at leaves of $T$ so that $T$ is RC-viable. More concretely, suppose the subtree rooted at $h_2$ is already known to be recursively self-sufficient. If $h_1$ is a neighbor of $h_2$ (thus  $|\Pi(h_1,h_2)| = 2$), we can simply do the same tests as the iterative algorithm due to  Lemma~\ref{lemma:RecursiveSS}. 

Now WLOG suppose $|\Pi(h_1,h_2)| > 2$. We assume $T$ is rooted at $h_1$. Let $v$ be the parent of $h_2$, and suppose that we already know the subtree $T_{-v}(h_2)$ is recursively self-sufficient. Lemma~\ref{lemma:RecursiveSS} tells us that the subtree rooted at $v$ is recursively self sufficient if and only if there is a sink $s$ in $T_{-v}(h_2) \cap S$ such that $v$ can evacuate to $s$, which requires at most $k$ calls to the oracle to test.

In fact, this is also the same for the parent $v$ (which is not $h_1$). Thus for any node $v' \in \Pi(h_1,h_2) \backslash \{h_1,h_2\}$ we can use the same test to test for self-sufficiency: if the tree rooted at $v'$ is found to be self-sufficient, then the tree rooted at any of its descendent is recursively self-sufficient.

This with a binary search along the path, we can find the highest node $v$ in $\Pi(h_1,h_2) \backslash \{h_1\}$ such that the subtree rooted at $v$ is recursively self-sufficient, making $O(k \log n)$ calls to the oracle. 

To count the total number of binary searches we need, note that after completing the binary search, either at least one sink is removed from $T$, or we gain new knowledge that for a hub $h$, a subtree rooted at $h$ is recursively self-sufficient. These events can only occur at most $O(k)$ times, so this way we only need $O(k^2 \log n)$ calls to the oracle.

We can adjust this process to achieve $O(k \log n)$ calls. To see this consider the following.

Let $u$ be the child of $h_1$ in $\Pi(h_1,h_2)$. Before applying a binary search along the path $\Pi(h_1,h_2)$ as above, we test whether $T_{-h_1}(u)$ is recursively self sufficient (rooted at $u$), where we only need to find a sink $s \in T_{-h_1}(u) \cap S$ such that $f(\mathrm{BP}(u,s),s) \leq \mathcal{T}$.

We analyze two outcomes:

\begin{enumerate}
	\item \emph{If $T_{-h_1}(u)$ is recursively self-sufficient}, we can mark the entire tree $T_{-h_1}(u)$.
	\item  \emph{If $T_{-h_1}(u)$ is not recursively self-sufficient}, we apply the binary search, making $O( |T_{-h_1}(u) \cap S| \log n)$ calls to the oracle to find the cut-off edge,  invoking the reaching criterion. This removes all sinks in $T_{-h_1}(u) \cap S$.
\end{enumerate}

In the first case, before proceeding with the rest of the algorithm, suppose the algorithm tests $f(\mathrm{BP}(u,s_i),s_i) \leq \mathcal{T}$ for a sequence of sinks $s_1,...,s_m$ from $T_{-h_1}(u) \cap S$, where $m \leq |T_{-h_1}(u) \cap S|$,  $f(\mathrm{BP}(u,s_m),s_m) \leq \mathcal{T}$ and $f(\mathrm{BP}(u,s_i),s_i) > \mathcal{T}$ for $i<m$.

By path monotonicity, for any node $v \notin V_{-h_1}(u)$, we would then know that $f(\mathrm{BP}(v,s_i),s_i) > \mathcal{T}$ for all $s_m$, thus there is no need to evaluate $f(\mathrm{BP}(v,s_i),s_i)$ for the remainder of the algorithm, for $i < m$. We say that sinks $s_1,...,s_{m-1}$ are \emph{rejected}, at which point no more oracle calls need to be wasted on them. We also say that $u$ accepts the sink $s_m$.

A sink can only be rejected once. On the other hand, a node in place of $u$ can only accept some sink once. In each of these two events the oracle is called exactly once. All of these events combined can only occur $O(k)$ times; the former because there can be at most $k$ sinks, the latter because there can be at most $O(k)$ hubs. Thus the total number of oracle calls made for the first case in the entire course of our algorithm is $O(k)$.

For the latter case, we can charge at most $O(k' \log n)$ calls each time we remove $k'$ sinks from $T$. As there can only be $k$ sinks to be removed in total, this costs $O(k \log n)$ calls.

Now, the last type of oracle calls for the reaching criterion are made when we check whether a tree rooted at a hub $h$ is recursively self-sufficient. In the iterative algorithm, we only check this for a hub $h$ if at most one of its neighbors in $T_{H(S)}$ is unmarked, and it is the same for this non-iterative version.

$h$ is either the only node in $T_{H(S)}$ that is unmarked, or has a natural `parent' in $T_{H(S)}$, which is its only non-marked neighbor $v'$ in $T_{H(S)}$. Thus we need to test if $T_{-v'}(v)$ is recursively self-sufficient. This is basically the same as the first case above; we reject a sink $s \in T_{-v'}(v) \cap S$ if $BP(\mathrm{BP}(v,s) > \mathcal{T}$, and when $v$ accepts a sink $s$ we declare $T_{-v'}(v)$ to be recursively self-sufficient. By the same counting argument, the number of oracle calls made for this case is also $O(k)$.

Thus overall we only need a total of $O(k \log n)$ oracle calls to test for and apply the reaching criterion throughout the algorithm.

%

\section{Omitted Proofs and Lemmas}

\section{Omitted Figures}

\begin{figure}[h]
 \centering
   \includegraphics[width=3in]{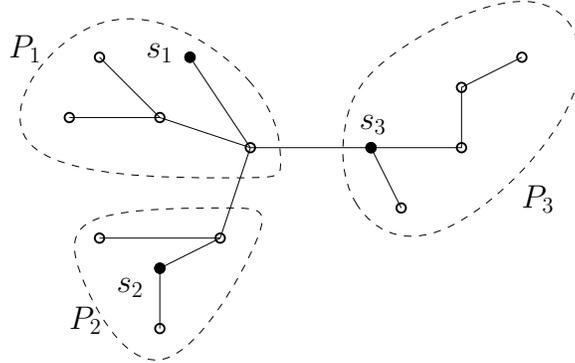}
   \label{fig:maxcomposition}
   \caption{Example of $F_S$: Partition $\mathcal{P}=\{P_1,P_2,P_3\},$, sinks $S = \{s_1,s_2,s_3\}$, $F_S(\mathcal{P}) = \max(f(P_1,s_1),f(P_2,s_2),f(P_3,s_3))$} 
\end{figure}

\begin{figure}[h]
 \centering
\includegraphics[width=0.5\textwidth]{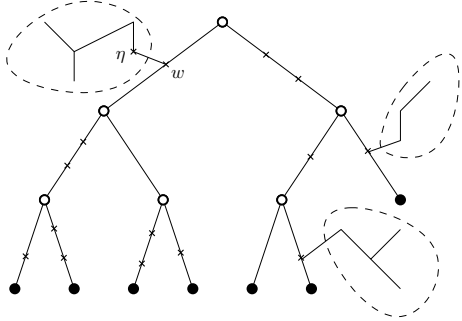}
   \label{fig:HubTree}
   \caption{Hub tree. Circled with dashed lines are outstanding branches, dark circles are sinks, and hollow circles are hubs.}
\end{figure}

\begin{figure}[h]
 \centering
\includegraphics[width=0.5\textwidth]{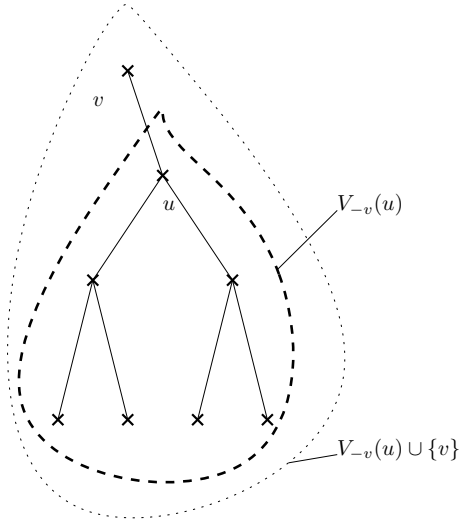}
   \label{fig:Peaking}
   \caption{Peaking criterion. Note the tree has no sinks. If $f(V_{-v}(u),u) \leq \mathcal{T}$, then $u$ can serve $f(V_{-v}(u),u)$, so no sink has to be placed below $u$; on the other hand, if $f(V_{-v}(u) \cup \{ v \},v) > \mathcal{T}$, then no node outside this figure can support $f(V_{-v}(u),u)$ single-handedly. This pinpoints the position of exactly one sink to be placed at $u$.}
\end{figure}

\begin{figure}[h]
 \centering
\includegraphics[width=0.5\textwidth]{}
   \label{fig:DualPeaking}
   \caption{Reaching criterion. Assume this tree is RC-viable. Dark circles are sinks, white circles are hubs. If $T_{-v}(u)$ is self-sufficient, then no extra sinks have to be put in $T_{-v}(u)$. If $BF(v,u) \cup V_{-v}(u)$ is not self-sufficient, RC-viability implies that we never have to assign $v$ to any sink $s$ in $T_{-v}(u)$; this is because assigning $v$ to a sink $s$ downwards will force all nodes in all outstanding branches attached to $v$ to also be assigned to $s$, which is not feasible unless we place a new sink in at least one of said outstanding branches, and this new sink may as well be placed at $v$, because $v$ can serve any outstanding branch attached to it (due to RC-viability), reverting the need to assign $v$ to $s$.}
\end{figure}

\end{document}